\newenvironment{reminder}[1]{\bigskip
	\noindent {\bf Reminder of #1  }\em}{\smallskip}
\DeclarePairedDelimiter\abs{\lvert}{\rvert}%
\def \poly {\text{poly}}
\def \exp {\text{exp}}
\def \polylog {\text{polylog}\ }
\def \eps {{\varepsilon}}
\def \NP {\mathsf{NP}}
\newtheorem{lemma}{Lemma}[section]
\newtheorem{definition}{Definition}[section]
\newtheorem{corollary}{Corollary}[section]
\newtheorem{theorem}{Theorem}[section]
\newtheorem{conjecture}{Conjecture}
\newtheorem{open}{Open Problem}
\newcommand{\Thr}[1]{\text{Thr}_{#1}}
\newcommand{\Bool}[1]{\{0,1\}^{#1}}
\newcommand{\PCP}{\text{PCP}}
\let\c@fconjecture\c@conjecture
\let\c@fconj\c@conj
\title{Imperfect Gaps in Gap-ETH and PCPs}
\author{Mitali Bafna\footnote{Harvard University, mitalibafna@g.harvard.edu, Supported by NSF Grant CCF-1565641 and CCF-1715187}, Nikhil Vyas\footnote{MIT, nikhilv@mit.edu, Supported by NSF Grant CCF-1741615}}
\date{}
\begin{document}
\maketitle

\begin{abstract}
We study the role of perfect completeness in probabilistically checkable proof systems (PCPs) and give a new way to transform a PCP with imperfect completeness to a PCP with perfect completeness, when the initial gap is a constant. In particular, we show that $\PCP_{c,s}[r,q] \subseteq \PCP_{1,1-\Omega(1)}[r+O(1),q+O(r)]$, for $c-s=\Omega(1)$. This implies that one can convert imperfect completeness to perfect in linear-sized PCPs for $NTIME[O(n)]$ with a $O(\log n)$ additive loss in the query complexity $q$. We show our result by constructing a ``robust circuit'' using threshold gates. These results are a gap amplification procedure for PCPs (when completeness is imperfect), analogous to questions studied in parallel repetition~\cite{Rao-PR} and pseudorandomness~\cite{chernoff-exp}. 

We also investigate the time complexity of approximating perfectly satisfiable instances of 3SAT versus those with imperfect completeness. We show that the Gap-ETH conjecture without perfect completeness is equivalent to Gap-ETH with perfect completeness, i.e. we show that Gap-3SAT, where the gap is not around 1, has a subexponential algorithm, if and only if, Gap-3SAT with perfect completeness has subexponential algorithms. We also relate the time complexities of these two problems in a more fine-grained way, to show that $T_2(n) \leq T_1(n(\log\log n)^{O(1)})$, where $T_1(n),T_2(n)$ denote the randomized time-complexity of approximating MAX 3SAT with perfect and imperfect completeness, respectively.
\end{abstract}

\section{Introduction}

The PCP theorem~\cite{ALMSS} was a breakthrough result showing that $\NP$ has proofs verifiable using only $O(1)$ bits and constant probability of error, with a polynomial blow-up in the size of the proof. The theorem led to a flurry of activity in getting the best set of parameters: the soundness (the probability of acceptance of incorrect proofs), proof size and queries. PCP constructions were instrumental in showing optimal hardness of approximation results for a host of problems such as $k$-SAT and 3-LIN~\cite{Hastad}. Despite this progress, many important questions have remained wide open. For instance: \emph{do there exist linear-size PCPs for $NTIME[O(n)]$, with constant queries and constant soundness?} Hence we believe it is important to understand the role of all the parameters in PCPs, and we focus our attention on the completeness of these proof systems.

We investigate the question: \emph{can imperfect completeness help obtain better PCPs?} The size versus query tradeoff in PCPs has been extensively studied: A long line of work culminated in a PCP for $NTIME[O(n)]$ with $O(n\cdot \polylog n)$ size and $O(1)$ queries~\cite{dinur}. On the other hand, Ben-Sasson et al~\cite{Ben-SassonKKMS13} achieved a linear-sized PCP for $NTIME[O(n)]$ with $O(n^{\epsilon})$ query size for all constants $\epsilon > 0$.\footnote{This particular construction is non-uniform. To our knowledge no explicit PCPs with $o(n)$ query complexity, constant soundness and linear size are known.}
These results are far from what is conjectured: namely, that PCPs exist with $O(1)$ queries and linear size. In this thesis, we show how to transform any PCP with imperfect completeness and constant gap (between soundness and completeness) to one with perfect completeness and a mild additive extra number of queries. The loss in query complexity in the transformation from imperfect to perfect completeness in the latter regime (of linear-size) is inconsequential in comparison to the query complexity of~\cite{Ben-SassonKKMS13}.

Although in current PCP constructions for $NTIME[O(n)]$, perfect completeness might come for free when one does not care about the verifier's predicate, PCPs with imperfect completeness are very important in showing optimal hardness of approximation for problems like 3LIN~\cite{Hastad}, where deciding satisfiability is in polynomial time. For other CSPs like Max 1-in-$k$-SAT one can get substantially better approximation algorithms for perfectly satisfiable instances~\cite{V1ink}. The Unique Games Conjecture of Khot~\cite{Khot} asks for a PCP with ``unique'' queries and imperfect completeness, the latter being necessary due to the tractability of satisfiable instances of Unique Games. Although in some previous cases like 3LIN, imperfect completeness was necessary, but for cases like 2-to-1 games and Max $k$-CSP one would guess that the same hardness of approximation results should hold with perfect completeness. Unfortunately all the current methods~\cite{DKKMS,Chan} incur a loss in completeness, and it is unclear whether this is because of the nature of the problem or due to the inefficacy of current methods. This leads to the central question: \emph{Given a CSP, how hard is it to approximate instances that are perfectly satisfiable, compared to those that are not?}


We also study this question in a fine-grained way, and compare the time complexities of approximating satisfiable versus imperfectly satisfiable instances of 3SAT. $\NP$-hardness results (while very useful in measuring intractability with respect to poly-time algorithms) do not imply tight or even superpolynomial lower bounds for the running time.  The Exponential Time Hypothesis (ETH)~\cite{eth} states that there are no $2^{o(n)}$ time algorithms for deciding satisfiability of 3SAT. Through the equivalence between PCPs and gap problems, using state of the art PCPs~\cite{dinur, Ben-SassonS08} there is a reduction from a 3SAT instance on $n$ variables and clauses to a Gap-3SAT instance with $O(n\cdot \polylog n)$ variables and clauses. This proves that under ETH, Gap-3SAT does not have $O(2^{n/\log^c(n)})$ algorithms for some fixed $c$, whereas Gap-3SAT has eluded even $2^{o(n)}$ algorithms. To get around precisely this gap, the Gap-ETH hypothesis was proposed~\cite{dinur-gap-eth, gap-eth2}. Gap-ETH states that Gap-3SAT does not have $2^{o(n)}$ algorithms. This hypothesis has led to several tight inapproximability results \cite{geth1, geth2, geth3, geth4} with respect to the running time required. We study the role of perfect completeness in Gap-ETH, where Gap-ETH without perfect completeness is the hypothesis that there are no $2^{o(n)}$ algorithms for Gap-3SAT without perfect completeness.

Gap amplification is in itself an important problem studied in the context of parallel repetition~\cite{RazPR}, error reduction and pseudorandomness~\cite{chernoff-exp}. We study this problem in PCPs and show a way to transform any PCP into a one-sided error one. Similar questions of gap amplification when completeness is not 1, have been studied
for parallel repetition~\cite{Rao-PR}, but these results incur a huge blow-up in the alphabet and hence cannot be applied to get perfect completeness in PCPs with constant alphabet. These techniques in parallel repetition have been used in quantum computation, to show instances of multi-player games with large separation between the entangled and classical value and amplification of entangled games~\cite{Rao-PR,AnchorPR}.

\subsection{Our contributions}


\begin{paragraph}{\bf PCPs without perfect completeness:}
We give an efficient way to boost the completeness of any PCP, which makes the completeness 1. Our results go via the construction of ``robust circuits'' for the approximate threshold function on $n$ bits. These circuits are of depth $O(\log n)$, fan-in $O(1)$ and size $O(n)$, and use successive layers of threshold gates to boost the fraction of ones in inputs
that have large Hamming weight, while maintaining the fraction of ones in other inputs below a certain threshold. The circuits are tolerant to some form of adversarial corruptions and this property allows us to prove the soundness of the new PCP. Our main theorem is the following:

\begin{theorem}\label{thm:pcp}
Let $c, s \in (0,1), s < c$ be constants. There exists a constant $s' \in (0,1)$ depending only on $c, s$ such that
$$\PCP_{c, s}[r, q] \subseteq \PCP_{1,s'}[r, q +O_{s,c}(r)].$$ Furthermore, if the original proof size was $n$, then the final proof size is $n+O(2^r)$.
\end{theorem}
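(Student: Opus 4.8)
The plan is to attach to the original proof a certificate that the original verifier accepts on many random strings, and to make that certificate locally checkable by running it through a suitable \emph{robust} circuit. Let $V$ be the given $\PCP_{c,s}[r,q]$ verifier, set $N=2^r$, and for a proof $\Pi$ write $b_\rho\in\{0,1\}$ for the accept/reject bit of $V^\Pi(x;\rho)$. The augmented proof is a pair $(\Pi,W)$, where $W$ is a full wire-assignment of a fixed circuit $C$ of fan-in $O(1)$ and depth $O(\log N)$, including its input layer $L_0$ which is meant to hold $(b_\rho)_\rho$; the circuit $C$ is designed so that it (essentially) computes the predicate ``$\mathrm{wt}(L_0)\ge cN$''. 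The new verifier draws a uniform $\rho\in\{0,1\}^r$, simulates $V^\Pi(x;\rho)$ using its $q$ queries and checks $L_0[\rho]=b_\rho$; then it follows the canonical path from the leaf $\rho$ up through $C$ (a path that visits exactly one gate of each layer and is a deterministic function of $\rho$ and the fixed wiring of $C$), reading for each of the $O(\log N)=O(r)$ gates on it the claimed gate value together with its $O(1)$ claimed input values and checking local consistency; finally it reads all $m_k=\Theta(\log N)$ claimed values of the output layer $L_k$ and checks that at least $\tfrac12 m_k$ of them are $1$. All queries outside the simulation of $V$ go into $W$ and number $O(r)$, so the query complexity is $q+O_{s,c}(r)$; the randomness is exactly $r$ because everything besides $\rho$ is computed deterministically; and the proof grows by $|W|=O(N)=O(2^r)$. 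This reduces the theorem to constructing $C$ and analyzing the resulting PCP.

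The properties I will need of $C$ are: it is layered with layers $L_0,L_1,\dots,L_k$, $k=O(\log N)$, with sizes $m_i$ shrinking geometrically so $\sum_i m_i=O(N)$ and $m_k=\Theta(\log N)$; every gate is a $t$-of-$d$ threshold gate with $d=O_{s,c}(1)$; the wiring and the canonical-parent rule are chosen so that the canonical $L_i$-ancestor of a uniformly random leaf is $(1\pm o(1))$-uniform over $L_i$; (completeness) on every input of weight $\ge cN$ the honest evaluation leaves at least $\tfrac12 m_k$ ones in $L_k$; and (robustness) for every input of weight $\le s^*N$ and every wire-assignment $W$ on it that claims at least $\tfrac12 m_k$ ones in $L_k$, a uniformly random canonical path passes through a locally inconsistent gate with probability $\ge\delta(s,c)>0$. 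Granting such a $C$, completeness is immediate: on a yes-instance the honest prover uses the good $\Pi$ (so $(b_\rho)_\rho$ has weight $\ge cN$) and the honest evaluation of $C$ on it, and all three checks pass with probability $1$. For soundness on a no-instance, fix any proof; if $\Pr_\rho[L_0[\rho]\ne b_\rho]\ge\Delta_0$ the first check rejects with probability $\ge\Delta_0$, and otherwise $\mathrm{wt}(L_0)\le sN+\Delta_0 N=:s^*N<cN$ (pick $\Delta_0<c-s$), so either the output-layer check rejects outright or robustness gives rejection probability $\ge\delta(s,c)$. Setting $s'=1-\min(\Delta_0,\delta(s,c))<1$ completes it.

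To construct $C$ I will use explicit constant-degree samplers/expanders. Each gate is a $t$-of-$d$ threshold with $t/d$ set to $\beta=\tfrac{s+c}{2}\in(s,c)$, and $d=\Theta_{s,c}(1)$ taken large enough that the $d$-neighborhood of all but an $\epsilon$-fraction of gates estimates the weight of the previous layer to additive accuracy $\xi\ll c-s$, where $\epsilon=\epsilon(d)$ is a small constant. Then one honest layer acts like a sharp step at $\beta$: a layer of weight $<\beta-\xi$ is sent to a layer of weight $\le\epsilon$, and one of weight $>\beta+\xi$ to a layer of weight $\ge1-\epsilon$. Hence the honest evaluation drives weight $\ge cN$ to weight $\ge1-\epsilon>\tfrac12$ already after one layer and keeps it there, and drives weight $\le s^*N$ to weight $\le\epsilon<\tfrac12$ and keeps it there; the geometric shrink of the layer sizes is what forces depth $\Theta(\log N)$ and yields total size $O(N)$. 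The interlayer wiring is a balanced bipartite sampler, and the canonical parent of a node is chosen so that each gate of $L_i$ is the canonical $L_i$-ancestor of $(1\pm o(1))\,N/m_i$ leaves.

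Given this, the soundness of $C$ follows from one observation. Fix $s^{**}\in(s^*,\beta-\xi)$, which is nonempty once $\Delta_0,\xi$ are small, and pick $d$ so that also $\epsilon<s^{**}/2$. The honest one-layer map sends any layer of weight $\le s^{**}$ to a layer of weight $\le\epsilon$. Take any $W$ on an input of weight $\le s^*<s^{**}$ whose output layer has weight $\ge\tfrac12>s^{**}$, and let $j$ be the last index with $\mathrm{wt}(W|_{L_j})\le s^{**}$; then $\mathrm{wt}(W|_{L_{j+1}})>s^{**}$. Every locally consistent gate of $L_{j+1}$ takes its honest value with respect to $W|_{L_j}$, and the honest image of $W|_{L_j}$ has weight $\le\epsilon$, so if $\mu_{j+1}$ is the fraction of inconsistent gates in $L_{j+1}$ then $s^{**}<\mathrm{wt}(W|_{L_{j+1}})\le\epsilon+\mu_{j+1}$, giving $\mu_{j+1}>s^{**}-\epsilon>s^{**}/2$. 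By the balance of the canonical-ancestor structure, a random canonical path meets an inconsistent gate of $L_{j+1}$ with probability $\ge\mu_{j+1}-o(1)=\Omega_{s,c}(1)$, which is the required $\delta(s,c)$. I expect the main obstacle to be the construction of $C$ rather than this last step: one must simultaneously obtain a bounded-fan-in \emph{one-step} sharp amplifier (this is where the explicit sampler parameters must be pushed, since with fan-in $O(1)$ the sampler error $\epsilon$ is only a small constant and has to be forced below $s^{**}/2$), a geometric shrink in the layer sizes giving linear size and logarithmic depth, and a balanced canonical-parent rule so that the random-$\rho$ path spreads uniformly across every layer; exhibiting a single explicit family meeting all of these is the technical heart. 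A minor point to verify along the way is the randomness accounting—because the inspected path is a deterministic function of $\rho$, the new verifier genuinely uses only $r$ random bits, as stated.
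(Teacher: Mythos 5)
Your proposal is essentially the paper's own construction: a linear-size, $O(\log 2^r)$-depth layered circuit of constant fan-in threshold gates wired by explicit expander samplers over the $2^r$ acceptance bits, with the prover supplying all gate values and the verifier checking one (uniformly spread) gate per layer plus the top, and with the same two-case soundness analysis (many inconsistent gates in some layer vs.\ a consistent evaluation whose top value must be $0$). The only differences are cosmetic — the paper shrinks the top layer to a single gate and halves the fraction of zeros per layer rather than reading a $\Theta(r)$-size output layer and taking a majority, and it phrases the soundness via induction rather than a ``last layer below threshold'' argument — and the sampler properties you postulate are exactly those the paper proves for its expander sampler.
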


Note that in the above theorem, one can prove inclusion in a PCP class, with arbitrary constant $s''$ (instead of a fixed constant $s'$) by applying derandomized serial repetition ($\PCP_{1,s'}[r,q] \subseteq \PCP_{1,s''}[r,O(q)]$ with same proof size). This does not blow up the size of the PCP and the query complexity only increases by a constant factor.

As a corollary we show that linear-sized PCPs for $NTIME[O(n)]$ with $q$ queries and imperfect completeness can be converted to a linear-sized PCPs with perfect completeness and $q+O(\log n)$ queries. Current PCP constructions with constant rate and alphabet have query complexity $n^{\Omega(1)}$~\cite{Ben-SassonKKMS13}, so we show that to improve upon this, it is enough to construct linear sized PCPs with imperfect completeness and better query complexity.

We also consider the notion of ``randomized reduction between PCPs'', defined below.
Bellare et al~\cite{BellareGS98} considered the notion of a randomized reduction $R$ between two promise problems given by sets $(A_1, B_1)$ and $(A_2, B_2)$. A randomized polynomial time reduction $R$ from promise problems $(A_1, B_1) \leq_R (A_2, B_2)$ with error probability $p$ satisfies:
\begin{enumerate}
    \item if $x \in A_1$ then w.p. $\geq 1-p$, $R(x) \in A_2$.
    \item if $x \in B_1$ then w.p. $\geq 1-p$, $R(x) \in B_2$.
\end{enumerate}

This notion naturally extends to PCP complexity classes. We give a randomized reduction between PCP classes with imperfect and perfect completeness.

\begin{theorem}\label{thm:rand-red}
Let $c, s \in (0,1), s < c$ be constants then there exists a constant $s' \in (0,1)$ depending only on $c, s$ such that,
$$\PCP_{c, s}[r, q] \leq_R \PCP_{1,s'}[r, q +O_{s,c}(\log r)]$$ with probability $1-2^{-\Omega(r)}$. Furthermore if the original proof size was $n$ then the final proof size is $n+O(2^r)$.
\end{theorem}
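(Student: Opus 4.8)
The plan is to run the construction behind Theorem~\ref{thm:pcp} and then shave its additive query overhead from $O(r)$ down to $O(\log r)$ by replacing the $O(r)$-query consistency check of the attached robust circuit with an $O(\log r)$-query \emph{probabilistic} check, while keeping the $q$ ``inherited'' queries into the original proof untouched; the reduction's randomness pays for the inner structure of this succinct check, and a $2^{-\Omega(r)}$ failure probability arises from a union bound over the $2^r$ random strings of $V$.

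First I would recall the relevant features of Theorem~\ref{thm:pcp}. Given a $\PCP_{c,s}[r,q]$-verifier $V$, the construction appends to the old proof $\pi$ the gate labels of a robust threshold circuit $C$ of fan-in $O(1)$, size $O(2^r)$ and depth $O(r)$, evaluated on the acceptance vector $a(\pi)=(a_\rho(\pi))_{\rho\in\{0,1\}^r}$, where $a_\rho(\pi)=1$ iff $V$ accepts $\pi$ on coins $\rho$; the new verifier picks a uniformly random leaf $\rho$, recomputes $a_\rho(\pi)$ with $q$ queries into $\pi$, and then reads and locally checks the $O(r)$ gates on the root-to-leaf path through $\rho$, accepting iff all checks pass and the output gate is labelled ``accept''. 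Two features matter: (a) the decision has the form $\varphi(\text{the }q\text{ bits of }\pi)\wedge\psi(\text{the }O(r)\text{ bits of }C\text{ on this path})$ with $\varphi,\psi$ local; and (b) soundness is robust — in the NO case, either $\Omega(2^r)$ gates of $C$ must be locally inconsistent (caught by additionally spot-checking one uniformly random gate of $C$, at $O(1)$ extra queries), or, for a constant fraction of leaves $\rho$, the $O(r)$ path-bits are $\Omega(1)$-far in Hamming distance from every assignment satisfying $\psi$.

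Next I would make the ``circuit part'' of this verifier succinct. By (b) it is a robust $O(r)$-query verifier for the predicate $\psi$ on $O(r)$ bits, so one round of PCP composition against an inner PCP of proximity for $\psi$ with query complexity $O(\log r)$ brings the circuit-side cost to $O(\log r)$; the composed verifier reads $q$ bits of $\pi$ and checks $\varphi$, reads $O(\log r)$ bits of the (attached, per-leaf) inner proof, and uses robustness to conclude, never examining more than $O(\log r)$ bits of $C$ itself. (Equivalently one can iterate the robust-circuit idea directly on the $O(r)$-bit path-verification instance, whose own circuit has depth $O(\log r)$.) The reduction is randomized because robustifying the outer verifier and instantiating the inner PCPP are randomized; choosing those structures so that each of the $2^r$ per-leaf gadgets is ``good'' except with probability $2^{-\omega(r)}$ and union-bounding gives an overall failure probability $2^{-\Omega(r)}$, using only $\poly(r)$ bits of reduction randomness. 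Perfect completeness survives each step (honest $\pi\mapsto$ honest circuit labels $\mapsto$ honest inner proofs), soundness becomes a constant $s'<1$ (pushable below any $s''$ by derandomized serial repetition at $O(1)$ multiplicative query cost), and since inner proofs of leaves sharing an ancestor in $C$ share almost all of their structure, they fold into $C$ so that the total added proof length is still $n+O(2^r)$.

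I expect the main obstacle to be the composition step: it must interact cleanly with the requirement that $\pi$ is \emph{not} re-encoded (the $q$ queries into $\pi$ survive verbatim), while simultaneously keeping the outer robustness constant bounded away from $0$, keeping the inner PCPP proof short enough that shared-structure folding still gives $n+O(2^r)$ rather than $n+O(2^r\poly(r))$, and making the per-leaf randomized construction fail with probability $2^{-\omega(r)}$ from only $\poly(r)$ random bits. A secondary point to nail down is that the single-random-gate spot-check in (b) really does cover the ``$\Omega(2^r)$ inconsistent gates'' case for every cheating labelling, not just on average.
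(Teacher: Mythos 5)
Your plan diverges from the paper's proof in a way that leaves two genuine gaps. The paper does not compose with an inner PCP of proximity at all: it rebuilds the circuit itself, wiring each threshold gate to a \emph{random} constant-size subset of the previous layer, so that the fraction of zeros squares at each layer ($1-\eta \mapsto 1-\eta^2$, Lemma~\ref{lem:comp-loglog}) rather than halving. This lets the circuit have depth $\log\log m = \log r + O(1)$, and the verifier simply checks one gate per layer, giving $q+O(\log r)$ queries directly; the reduction's randomness is exactly the random wiring, and the $2^{-\Omega(r)}$ failure probability comes from the completeness lemma and from a union bound (over layers and over all $2^{w_i}$ layer-strings, Lemma~\ref{lem:soundness-loglog}) showing the random circuit is sound against \emph{every} cheating labelling. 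In your plan, by contrast, the composition machinery (PCPP) is deterministic, so it is not even clear what event fails with probability $2^{-\Omega(r)}$, and the crucial quantitative gain ($O(r)\to O(\log r)$) is supposed to come from composition rather than from a shallower circuit.

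The concrete problems with your route are these. First, the robustness you assert in (b) does not follow from the soundness analysis of Theorem~\ref{thm:pcp} and is likely false for that verifier as built: the per-layer checks $\bigl(L'_{ij}(\ell_{i-1}) \stackrel{?}{=} \ell'_{ij}\bigr)$ read nearly disjoint sets of bits (the input set $S'_{ij}$ of the layer-$i$ gate need not contain the layer-$(i-1)$ gate being checked), so a cheating proof can be caught by only one of the $d=O(r)$ conjuncts, making the verifier's view $O(1/r)$-close, not $\Omega(1)$-far, from an accepting view. Fixing this requires a separate robustization (bundling/encoding), which costs queries and proof length and is not sketched. Second, the claim that the $2^r$ per-leaf inner PCPP proofs ``fold'' so that the added length stays $O(2^r)$ is unsupported: distinct leaves share circuit bits but their PCPP encodings are proofs about different (though overlapping) $O(r)$-bit statements, and nothing guarantees these encodings can be shared; naively you get $n+O(2^r\cdot\mathrm{poly}(r))$, violating the stated proof length. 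Unless you can supply both a constant-robustness outer verifier within size $O(2^r)$ and a genuine folding argument, the proposal does not establish the theorem; the paper's randomly-wired depth-$\log\log m$ circuit avoids both issues.
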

\end{paragraph}

\begin{paragraph}{\bf Gap-ETH without perfect completeness}
We study the relation between time complexities of approximating satisfiable instances of MAX 3SAT versus that of approximating unsatisfiable instances. We first show the equivalence of the Gap-ETH conjecture with perfect and imperfect completeness. We formally state the Gap-ETH conjecture below:
\begin{conjecture}[Gap Exponential-Time Hypothesis (Gap-ETH)~\cite{dinur-gap-eth,gap-eth2}] For some constants $\delta,\epsilon > 0$, no algorithm can, given a 3-SAT formula $\phi$ on $n$ variables and $m = O(n)$ clauses, solve the decision problem MAX 3-SAT$(1, 1-\epsilon)$ in $O(2^{\delta n})$ time.
\end{conjecture}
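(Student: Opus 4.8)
The statement as worded is a hardness \emph{hypothesis} rather than a provable theorem, so the natural goal of a ``proof plan'' is to reduce it to a single concrete construction task and to identify which ingredient is missing. The plan is to derive Gap-ETH from the ordinary Exponential-Time Hypothesis. Assume ETH, so that 3SAT on $n$ variables has no $2^{o(n)}$-time algorithm. By the Sparsification Lemma of Impagliazzo, Paturi and Zane we may pass, at the cost of a sub-exponential multiplicative factor, to instances with $m = O(n)$ clauses; this does not change the form of the hypothesis. Now suppose we are handed a PCP verifier for $\mathrm{NTIME}[O(n)]$ --- equivalently, a reduction from size-$n$ 3SAT to MAX 3-SAT$(c,s)$ with $c - s = \Omega(1)$ --- that has $O(1)$ query complexity, constant alphabet, and proof size $O(n)$ (so $r = \log n + O(1)$, which is exactly what linear size means). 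Composing this reduction with a hypothetical $2^{\delta' n}$-time algorithm for Gap-3SAT, for a sufficiently small constant $\delta' > 0$, would decide the original 3SAT instance in $2^{o(n)}$ time, contradicting ETH. Hence Gap-3SAT has no $2^{\delta n}$-time algorithm for some $\delta > 0$, which is exactly the conjecture once we arrange perfect completeness $c = 1$.

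Arranging $c = 1$ is where our earlier results come in, and it is the one part of the plan that is actually carried out here. By Theorem~\ref{thm:pcp} (or Theorem~\ref{thm:rand-red}, with the reduction made randomized) it suffices to exhibit the required PCP with \emph{imperfect} completeness: the transformation to perfect completeness increases the query complexity by only $O_{s,c}(r) = O(\log n)$ and the proof size by only $O(2^r) = O(n)$, so linearity of the proof size and constancy of the query count --- after a final derandomized serial-repetition step to restore a large soundness gap --- are all preserved. Thus the whole question collapses to: does $\mathrm{NTIME}[O(n)]$ admit a linear-size PCP with $O(1)$ queries and constant soundness, \emph{without} insisting on perfect completeness?

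The main obstacle --- in fact the only remaining obstacle --- is precisely this missing PCP, and it is a well-known open problem. The size-efficient constant-query PCPs of Dinur~\cite{dinur} and of Ben-Sasson--Sudan have quasi-linear proof size $n \cdot \mathrm{polylog}\,n$, which only yields the weaker conclusion that Gap-3SAT has no $2^{n/\log^{c} n}$-time algorithm under ETH, not the $2^{\Omega(n)}$ lower bound Gap-ETH asks for; and the genuinely linear-size constructions of Ben-Sasson, Kaplan, Kindler, Kopparty and Saraf~\cite{Ben-SassonKKMS13} have query complexity $n^{\Omega(1)}$, far too large to compose with any sub-exponential-time algorithm. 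Closing this gap --- even only in the imperfect-completeness regime, by the reduction in the previous paragraph --- is equivalent to proving Gap-ETH, which is why the statement is phrased as a conjecture rather than a theorem. What Theorem~\ref{thm:pcp} contributes to this picture is that the ``perfect completeness'' clause in the conjecture costs nothing: any progress on linear-size constant-query PCPs, with or without perfect completeness, would settle it.
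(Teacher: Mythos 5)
The statement you were handed is not a theorem of the paper at all: it is the Gap-ETH hypothesis, imported from prior work and stated as a conjecture, so the paper contains no proof of it and you were right not to attempt one. Your decision to treat it as a hypothesis and to explain what would be needed to derive it from ETH is the correct reading, and your assessment of the state of the art (quasi-linear-size constant-query PCPs give only $2^{n/\mathrm{polylog}\,n}$-type hardness under ETH, while the known linear-size constructions have $n^{\Omega(1)}$ query complexity) matches the discussion in the paper's introduction.

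One substantive correction to your conditional plan, though. You claim that applying Theorem~\ref{thm:pcp} to a hypothetical linear-size, constant-query, imperfect-completeness PCP would preserve ``constancy of the query count.'' It would not: the transformation costs an additive $O_{s,c}(r)=O(\log n)$ queries, so the resulting PCP makes $O(\log n)$ queries, and converting an $O(\log n)$-query CSP back into a 3SAT instance incurs a $2^{O(q)}=\mathrm{poly}(n)$ blow-up per constraint, destroying the linear size ($m=O(n)$) that the Gap-ETH statement requires. This is exactly why the paper does not route its Gap-ETH results through Theorem~\ref{thm:pcp}: the equivalence of Gap-ETH with and without perfect completeness is proved separately (Theorem~\ref{thm:equ}, via randomly sampled threshold clauses, the balanced-list argument, and the Lov\'asz local lemma), and the fine-grained comparison goes through the randomized reduction of Theorem~\ref{thm:rand-red}, whose $O(\log r)=O(\log\log n)$ query loss, combined with Dinur's query reduction, yields only a $(\log\log n)^{O(1)}$ blow-up. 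So your closing claim that the perfect-completeness clause in the conjecture ``costs nothing'' is true in spirit, but in this paper it is established by Theorem~\ref{thm:equ} rather than by the PCP transformation you invoke.
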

There are many versions of the Gap-ETH conjecture that one can consider. Many works study the randomized Gap-ETH conjecture which says that there are not even any randomized algorithms that can decide Max 3-SAT$(1,1-\epsilon)$. We show the following theorem:

\begin{theorem}\label{thm:equ}
If there exists a randomized (with no false positives) $2^{o(n)}$ time algorithm for MAX 3SAT$(1,1-\gamma)$ for all constant $\gamma > 0$ then there exists a randomized (with no false positives) $2^{o(n)}$ time algorithm for MAX 3SAT$(s(1+\epsilon),s)$ for all constants $s, \epsilon > 0$.
\end{theorem}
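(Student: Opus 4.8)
The plan is to obtain this as a consequence of the $\PCP$ completeness-boosting statement of Theorem~\ref{thm:pcp}, using the standard dictionary between gap versions of MAX 3SAT and $\PCP$ classes together with a final gap-amplification step. First I would view a MAX 3SAT$(s(1+\epsilon),s)$ instance $\phi$ on $n$ variables and $m = O(n)$ clauses as a $\PCP_{s(1+\epsilon),\,s}[r,q]$ verifier with $r = \lceil \log m\rceil = O(\log n)$, $q = 3$, and proof length $n$: the proof is the assignment, the verifier reads the three literals of a uniformly random clause and accepts iff that clause is satisfied. Since $c - s = s\epsilon = \Omega(1)$, Theorem~\ref{thm:pcp} applies and gives a $\PCP_{1,\,s'}[r,\,q + O_{s,\epsilon}(r)] = \PCP_{1,s'}[O(\log n),\,O(\log n)]$ verifier with perfect completeness, constant soundness $s' < 1$, and proof length $n + O(2^r) = O(n)$. (If one is content with a plain randomized algorithm one may instead use Theorem~\ref{thm:rand-red}, which lowers the query count to $q + O(\log r) = O(\log\log n)$; I comment on this trade-off below.)

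Next I would convert this perfect-completeness $\PCP$ back into a 3-CNF. Enumerating the $2^r = O(n)$ random strings yields a width-$q'$ constraint satisfaction instance on $O(n)$ variables and $O(n)$ constraints, with perfect completeness and constant gap $1-s'$, where $q' = O(\log n)$ (resp.\ $O(\log\log n)$). To force width $3$, for each random string $R$ introduce $O(q')$ fresh auxiliary variables and express the acceptance predicate $P_R$ — which has a circuit of size $O(q')$ — as a conjunction of $O(q')$ clauses of width $3$ via the Tseitin transformation. The resulting 3-CNF $\psi$ has $O(n q')$ variables and is satisfiable exactly when the $\PCP$ has an accepting proof, so perfect completeness is preserved; but a rejecting random string now forces only a single violated clause out of its $O(q')$-clause gadget, so in the NO case only an $\Omega(1/q')$ fraction of clauses is violated. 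Hence $\psi$ is a MAX 3SAT$(1,\,1-\Omega(1/q'))$ instance.

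To finish I would restore a constant gap. Applying the gap-amplification step of~\cite{dinur} $O(\log q')$ times — it preserves perfect completeness, keeps the alphabet and width constant, and multiplies the instance size by $O(1)$ per round, so that afterwards one can return to a 3-CNF with $O(1)$ overhead — produces a MAX 3SAT$(1,1-\gamma)$ instance $\psi'$, with $\gamma$ a constant depending only on $s,\epsilon$, on $n\cdot(\log\log n)^{O(1)}$ variables and clauses (using Theorem~\ref{thm:rand-red}; via Theorem~\ref{thm:pcp} one gets $n\cdot(\log n)^{O(1)}$). Running the hypothesized randomized, no-false-positive $2^{o(\cdot)}$-time algorithm for MAX 3SAT$(1,1-\gamma)$ on $\psi'$ decides $\phi$: a YES instance maps to a satisfiable $\psi'$ and is accepted with constant probability, while a NO instance maps to a $\psi'$ that is at most $(1-\gamma)$-satisfiable and is therefore never accepted, so no false positives are introduced — here one uses that, via Theorem~\ref{thm:pcp}, every step of the reduction is deterministic and promise-preserving. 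The number of variables being $n^{1+o(1)}$, the total running time stays sub-exponential in $n$.

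The main obstacle is controlling the size blow-up while simultaneously re-establishing a constant gap. Boosting completeness unavoidably raises the query complexity from $O(1)$ to $\omega(1)$, and this costs us twice on the way back to a 3-CNF: once multiplicatively (the auxiliary variables, a factor $q'$) and once in the gap (which collapses to $\Theta(1/q')$, forcing a further $\poly(q')$ factor to re-amplify). Keeping the product down to $n\cdot(\log\log n)^{O(1)}$ — rather than a polylogarithmic factor, which would be fatal for a Gap-ETH-style statement — is exactly what pushes one toward the sharper query bound of Theorem~\ref{thm:rand-red}, at the price of a two-sided randomized reduction. Verifying that this near-linear blow-up still preserves the $2^{o(n)}$ guarantee, and that the randomness introduced by Theorem~\ref{thm:rand-red} (which can in principle turn a NO instance into a non-NO one with tiny probability) can be handled without breaking the no-false-positive requirement — either by reverting to the deterministic inclusion of Theorem~\ref{thm:pcp} or by a separate argument — is the delicate bookkeeping that the proof must carry out.
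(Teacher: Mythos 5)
There is a genuine gap here, and it is precisely the one your ``delicate bookkeeping'' paragraph gestures at but does not resolve: a $2^{o(n)}$-time guarantee is \emph{not} preserved under superlinear instance blow-up, even a $(\log\log n)^{O(1)}$ one. If the hypothesized algorithm for MAX 3SAT$(1,1-\gamma)$ runs in time $2^{N/\log\log\log N}$ (which is $2^{o(N)}$), feeding it instances of size $N = n(\log\log n)^{O(1)}$ gives time far above $2^{o(n)}$, let alone $2^{n\,\polylog n}$-type bounds for the $n\polylog n$ blow-up of the deterministic route via Theorem~\ref{thm:pcp}. This is exactly the phenomenon (polylog blow-up of PCP reductions not preserving subexponential time) that motivates Gap-ETH in the first place, as discussed in the paper's introduction. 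So your construction proves the paper's fine-grained \emph{corollary} ($T(n)$ for the satisfiable problem implies $T(n(\log\log n)^{O(1)})$ for the imperfect-completeness problem), which is indeed derived from Theorem~\ref{thm:rand-red} plus Dinur's query reduction, but it does not prove Theorem~\ref{thm:equ}. There is also a secondary problem with the no-false-positives requirement: the randomized route through Theorem~\ref{thm:rand-red} maps a NO instance to a non-NO instance with probability $2^{-\Omega(r)} = 1/\poly(n)$, which is not negligible and cannot simply be ignored, while the deterministic escape hatch (Theorem~\ref{thm:pcp}) makes the size blow-up worse, not better.

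The paper's actual proof takes a different route that keeps the instance size \emph{linear} at the cost of an exponentially small success probability, which is then paid for by repetition. Lemma~\ref{lem:gap} gives a randomized reduction from MAX 3SAT$(s(1+\epsilon),s)$ on $n$ variables to MAX $O(k)$-CSP$(1,1/2)$ on $O(n)$ variables in which a NO instance is \emph{always} mapped to a NO instance (a ``balancedness'' test on the sampled clause list is checked explicitly, and an unbalanced list is replaced by a canonical NO instance; soundness of balanced lists is then deterministic via the expander-sampler property), while a YES instance is mapped to a perfectly satisfiable instance with probability at least $2^{-n/k}$, proved via the Lov\'asz local lemma using the bounded intersection of the sampler's query sets. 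The algorithm then repeats this reduction $2^{n/k}n^2$ times, converts each $O(k)$-CSP$(1,1/2)$ instance to MAX 3SAT$(1,1-\gamma)$ with only a constant-factor ($O(k2^k)$) blow-up, and runs the hypothesized no-false-positive algorithm on each; since $k$ can be taken arbitrarily large, the total time is below $2^{\delta n}$ for any prescribed $\delta$, and one-sidedness is preserved because NO instances never produce a YES call. To repair your argument you would need some analogous mechanism that trades success probability for instance size; composing Theorems~\ref{thm:pcp} or~\ref{thm:rand-red} with gap amplification cannot do this on its own.
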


As the original Gap-ETH hypothesis~\cite{dinur-gap-eth,gap-eth2} talks about deterministic algorithms we would prefer to get a deterministic reduction between these two problems.

We can get more fine-grained results relating the time-complexities of approximating MAX 3SAT with perfect and imperfect completeness using Theorem~\ref{thm:rand-red} stated earlier.
\begin{corollary}
If there exists a $T(n)$ time algorithm for MAX 3SAT$(1,1-\delta)$ for all $\delta > 0$ then there exists a $T(n(\log\log n)^{O(1)})$ time randomized algorithm for MAX 3SAT$(1-\epsilon,1-\gamma)$ for all $\epsilon,\gamma$ such that, $0< \epsilon < \gamma$. 
\end{corollary}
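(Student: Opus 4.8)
The plan is to read a MAX 3SAT instance with imperfect completeness as a PCP, apply Theorem~\ref{thm:rand-red} to obtain perfect completeness, and then translate the resulting PCP back into a 3SAT instance while keeping both the size blow-up and the gap under control. For the first step, note that a 3SAT formula $\phi$ on $n$ variables and $m=O(n)$ clauses, regarded as the promise problem MAX 3SAT$(1-\epsilon,1-\gamma)$, lies in $\PCP_{1-\epsilon,1-\gamma}[\log m+O(1),3]$: the proof is the length-$n$ assignment, the verifier uses $\log m$ random bits to pick a uniform clause, reads its $3$ variables, and accepts iff that clause is satisfied. Since $0<\epsilon<\gamma$, the completeness $1-\epsilon$ and soundness $1-\gamma$ are constants with $1-\epsilon>1-\gamma$, so Theorem~\ref{thm:rand-red} applies with $r=\log m+O(1)=\log n+O(1)$ and $q=3$; it yields a randomized reduction, succeeding with probability $1-2^{-\Omega(r)}=1-n^{-\Omega(1)}$, to $\PCP_{1,s'}[\log n+O(1),\,3+O_{\epsilon,\gamma}(\log\log n)]$ (since $\log r=\log\log n+O(1)$) with proof size $n+O(2^{r})=O(n)$, where $s'\in(0,1)$ is a constant depending only on $\epsilon,\gamma$.

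For the second step, I would turn this PCP back into a 3SAT instance. Enumerating the $2^{r}=O(n)$ random strings yields a CSP with $O(n)$ constraints over $O(n)$ Boolean variables, each constraint of arity $q'=3+O(\log\log n)$, with perfect completeness and soundness $s'$. The decision predicate of the new verifier has circuit complexity $O(\log\log n)$ --- as the construction behind Theorem~\ref{thm:rand-red} provides, namely the robust threshold circuit with $O(1)$ fan-in, whose per-random-string test is a conjunction of the original arity-$3$ clause check and $O(\log r)$ constant-arity gate-consistency checks --- so Tseitin-transforming each predicate gives $O(\log\log n)$ width-$3$ clauses and $O(\log\log n)$ fresh auxiliary variables per constraint, hence a 3SAT instance with $n\cdot(\log\log n)^{O(1)}$ variables and clauses that is still perfectly satisfiable in the yes case. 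The main obstacle appears here: a proof violating one of these arity-$q'$ constraints need only falsify one of its $O(\log\log n)$ clauses, so this translation shrinks the gap by a $\Theta(\log\log n)$ factor, leaving soundness $1-\Theta((1-s')/\log\log n)$, which is not a constant below $1$.

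For the third step, I would restore a constant gap by a cheap amplification. Reduce the 3SAT instance to a constraint graph over a constant-size alphabet (standard, with $O(1)$ size blow-up and $O(1)$ gap loss), then run $k=\log\log\log n+O(1)$ rounds of Dinur's gap-amplification theorem~\cite{dinur} --- each round multiplies the instance size by $O(1)$, multiplies the gap by a constant factor up to a fixed cap, and preserves perfect completeness --- so that after these rounds the gap is an absolute constant while the size is still $n\cdot(\log\log n)^{O(1)}$, and finally re-encode the result as 3SAT with $O(1)$ blow-up. (Alternatively, compose the robust PCP above with a polynomial-size PCP of proximity for its size-$O(\log\log n)$ decision circuit, which gives $O(1)$ queries, constant soundness and $(\log\log n)^{O(1)}$ extra proof bits per constraint directly.) Steps two and three together form a deterministic, polynomial-time reduction sending yes-instances to satisfiable 3SAT formulas and no-instances to $(1-\delta)$-far-from-satisfiable ones, for a constant $\delta=\delta(\epsilon,\gamma)>0$ and instance size $n\cdot(\log\log n)^{O(1)}$. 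Running the hypothesized $T(\cdot)$-time algorithm for MAX 3SAT$(1,1-\delta)$ on this instance, and accounting for the $n^{-\Omega(1)}=o(1)$ failure probability inherited from Theorem~\ref{thm:rand-red}, yields the claimed $T(n\cdot(\log\log n)^{O(1)})$-time randomized algorithm for MAX 3SAT$(1-\epsilon,1-\gamma)$.
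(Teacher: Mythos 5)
Your proposal is correct, and its first half is exactly the paper's route: view MAX 3SAT$(1-\epsilon,1-\gamma)$ as a $\PCP_{1-\epsilon,1-\gamma}[\log n+O(1),3]$ verifier (uniform clause sampling) and apply Theorem~\ref{thm:rand-red} to get perfect completeness, constant soundness $s'$, proof size $O(n)$, and query complexity $3+O(\log\log n)$. Where you diverge is the second half. The paper simply invokes Dinur's query-reduction (composition) result to bring the $O(\log\log n)$ queries back down to a constant while preserving perfect completeness, which is your parenthetical alternative; for the stated $n(\log\log n)^{O(1)}$ bound this implicitly needs the per-constraint overhead to be polynomial in the size-$O(\log\log n)$ decision predicate (e.g., a polynomial-size assignment tester/PCPP), since an exponential-size tester would only give $n\,\mathrm{polylog}\,n$. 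Your main route instead Tseitin-splits each arity-$(3+O(\log\log n))$ constraint into width-3 clauses, accepts the resulting $\Theta(1/\log\log n)$ gap, and restores a constant gap with $O(\log\log\log n)$ rounds of Dinur's amplification, each costing only a constant factor in size and preserving perfect completeness; this keeps the instance at $n(\log\log n)^{O(1)}$ and uses Dinur's amplification theorem purely as a black box, avoiding any PCPP for superconstant-size circuits. Both routes are valid and give the same final bound; yours is more self-contained and makes the size accounting explicit, while the paper's is shorter but leans on the composition machinery. Your handling of the $2^{-\Omega(r)}=n^{-\Omega(1)}$ failure probability of the randomized reduction is also consistent with the corollary's "randomized algorithm" conclusion.
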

\end{paragraph}

\subsection{Previous work}
Bellare et al~\cite{BellareGS98} also studied the problem of transforming probabilistically checkable proofs with imperfect completeness to those with perfect completeness. Their techniques do not yield any inclusions for PCP classes. They proved the following randomized reduction between PCP classes:

$$\PCP_{c,s}[r, q] \leq_R \PCP_{1,rs/c}[r, qr/c]$$

For constant $c$ and $r = \omega(1)$, they lose a multiplicative factor of $r$ in the soundness, which makes the theorem non-trivial only when $s = o(1)$.


\section{Preliminaries}
We will use the following notation:
\begin{paragraph}{\bf Notation:}
$\Thr{\delta}(x_{1},\ldots,x_{n})$ = threshold at $\delta$-fraction taken on the set of bits $\{x_{1},\ldots,x_{n}\}$.
We also use $\Thr{\delta}(x|_S)$ to mean that the threshold is with respect to the bits of $x$ restricted to $S \subseteq [n]$ and sometimes drop the $x$ and $\delta$ to use $\Thr{}(S)$, when the inputs and the fraction being used is clear from context. $exp(x)$ refers to $e^{x}$. For a string $x \in \{0, 1\}^n$, let $\bar{x} = \frac{1}{n}\sum_i x_i$, denote the average number of 1's in $x$.

MAX $k$-CSP$(c,s)$ is the promise problem of deciding whether there exists an assignment satisfying more than $c$-fraction given $k$-clauses or every assignment satisfies at most $s$ fraction of clauses. When the CSP is a 3SAT instance, it is denoted by MAX 3SAT$(c,s)$.
\end{paragraph}

We now discuss some standard probability results like the Chernoff bound and the Lov\'asz local lemma.
\begin{paragraph}{\bf Chernoff Bounds:}
\begin{enumerate}
    \item\label{c:2} Multiplicative Chernoff bound 1: Let $X = \frac{1}{n}X_i$, where $X_1,\ldots, X_n$ are independent random variables in $\{0,1\}$,  with $E[X] = \mu$. Then for all $\delta \geq 1$,
    $$\Pr\left[X > (1+\delta)\mu\right] \leq exp(-\Omega(\delta\mu))$$
    for $0 < \delta \leq 1$,
    $$\Pr[X > (1+\delta)\mu] \leq exp(-\delta^2\mu/3))$$
    $$\Pr[X < (1-\delta)\mu] \leq exp(-\delta^2\mu/2))$$
    \item\label{c:3} Multiplicative Chernoff bound 2: Let $X = \frac{1}{n}X_i$, where $X_1,\ldots, X_n$ are random variables in $\{0,1\}$,  with $E[X] = \mu$. Then for all $\delta \geq 2$,
    $$\Pr[X > (1+\delta)\mu] \leq \left(\frac{e^{\delta}}{(1+\delta)^{1+\delta}}\right)^\mu \leq \exp(-\Omega(\delta(\log(1/\delta))\mu)).$$
\end{enumerate}
\end{paragraph}

\begin{lemma}[Lov\'asz local lemma]\label{lem:lll}
Let $E_1, E_2, \ldots, E_n$ with $\Pr[E_i] = p$ be events such that any $E_i$ is independent of all but $d$ other events. Then if $pe(d+1) \leq 1$ then
$$\Pr\left[\bigcap_i \neg E_i\right] \geq (1-1/d)^n$$
\end{lemma}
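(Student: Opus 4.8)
The plan is to prove the standard inductive strengthening of the local lemma: conditioning on the non-occurrence of any subset of the bad events cannot make any remaining bad event much more likely than its unconditional probability. Precisely, I would first establish the auxiliary claim that for every index $i$ and every subset $S \subseteq [n]\setminus\{i\}$ one has $\Pr[\bigcap_{j\in S}\neg E_j] > 0$ together with $\Pr[E_i \mid \bigcap_{j\in S}\neg E_j] \le \tfrac{1}{d+1}$, proving both parts simultaneously by induction on $|S|$. The base case $S=\emptyset$ is immediate, since $\Pr[E_i] = p \le \tfrac{1}{e(d+1)} \le \tfrac1{d+1}$ by the hypothesis $pe(d+1)\le 1$.

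For the inductive step I would partition $S$ into the set $S_1$ of events of $S$ on which $E_i$ depends (so $|S_1| \le d$) and $S_2 = S \setminus S_1$, the events of $S$ independent of $E_i$. If $S_1 = \emptyset$ then $E_i$ is independent of $\bigcap_{j\in S}\neg E_j$ and the bound is simply $p \le \tfrac1{d+1}$. Otherwise, writing $A = \bigcap_{j\in S_2}\neg E_j$ (which has positive probability by the induction hypothesis, since $|S_2| < |S|$), I would expand
$$\Pr\Bigl[E_i \Bigm| \bigcap_{j\in S}\neg E_j\Bigr] = \frac{\Pr\bigl[E_i \cap \bigcap_{j\in S_1}\neg E_j \bigm| A\bigr]}{\Pr\bigl[\bigcap_{j\in S_1}\neg E_j \bigm| A\bigr]}.$$
The numerator is at most $\Pr[E_i \mid A] = \Pr[E_i] = p$ by independence of $E_i$ from the family indexed by $S_2$. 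For the denominator, enumerate $S_1 = \{j_1,\dots,j_k\}$ with $k \le d$ and apply the chain rule to write it as $\prod_{l=1}^k \bigl(1 - \Pr[E_{j_l}\mid \neg E_{j_1},\dots,\neg E_{j_{l-1}},A]\bigr)$; every conditioning set appearing here is a proper subset of $S$, so the induction hypothesis makes each factor at least $1 - \tfrac1{d+1}$, hence the denominator is at least $(1-\tfrac1{d+1})^k \ge (1-\tfrac1{d+1})^d \ge e^{-1}$, using the elementary inequality $(1-\tfrac1m)^{m-1} \ge e^{-1}$. Combining the two bounds gives $\Pr[E_i \mid \bigcap_{j\in S}\neg E_j] \le ep \le \tfrac1{d+1}$, and positivity of $\bigcap_{j\in S}\neg E_j$ follows from the same chain-rule expansion, each factor being positive by the induction hypothesis. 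This closes the induction.

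Finally I would obtain the lemma by one more application of the chain rule together with the auxiliary claim:
$$\Pr\Bigl[\bigcap_{i=1}^n \neg E_i\Bigr] = \prod_{i=1}^n \Bigl(1 - \Pr\bigl[E_i \bigm| \bigcap_{j<i}\neg E_j\bigr]\Bigr) \ge \Bigl(1 - \frac{1}{d+1}\Bigr)^n \ge \Bigl(1 - \frac1d\Bigr)^n,$$
the last inequality holding since $1-\tfrac1{d+1} > 1-\tfrac1d$; in fact the argument yields the slightly stronger bound $(1-\tfrac1{d+1})^n$. The only genuinely delicate points are threading the ``positive probability'' condition through the induction so that every conditional probability written above is well-defined, and pinning down the elementary inequality $(1-\tfrac1{d+1})^d \ge e^{-1}$ that converts the hypothesis $pe(d+1)\le 1$ into the clean constant $\tfrac1{d+1}$; everything else is routine bookkeeping with conditional probabilities.
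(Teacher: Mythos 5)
Your argument is the standard inductive proof of the symmetric Lov\'asz local lemma and it is correct; the paper itself states Lemma~\ref{lem:lll} as a known fact without proof, so there is nothing internal to compare against. Two small remarks: the bound on the numerator, $\Pr[E_i \mid A]=\Pr[E_i]$, implicitly uses that $E_i$ is \emph{mutually} independent of all events outside its dependency set (not merely pairwise independent of each), which is the intended reading of the hypothesis; and the induction is really strong induction on $|S|$, since the chain-rule expansion conditions on subsets of all smaller sizes. As you note, the argument actually yields the slightly stronger bound $(1-\tfrac{1}{d+1})^n$, which dominates the stated $(1-\tfrac{1}{d})^n$.
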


Let us now define probabilistic proof systems. Firstly, we define the notion of an $(r, q)$-restricted verifier:
For integer valued functions $r(\cdot)$ and $q(\cdot)$, a verifier is said to be $(r, q)$-restricted if
on every input of length $n$, it tosses at most $r(n)$ coins and queries the proof in at most $q(n)$ bits non-adaptively.

\begin{definition}[PCP]
For integer-valued functions $r(\cdot), q(\cdot)$ and functions $c(\cdot),s(\cdot)$ mapping to $[0,1]$, the class $\PCP_{c,s}[r,q]$ consists of all languages for which there exists an $(r,q)$-restricted non-adaptive verifier $V$ with the following properties:
\begin{enumerate}
    \item Completeness: For all $x \in L$, there exists a proof $\pi$ such that $V^{\pi}(x)$ accepts with probability at least $c$ (over the coin tosses of $V$).
    \item Soundness: For all $x \notin L$, for all proofs $\pi$, $V^{\pi}(x)$ accepts with probability at most $s$.
\end{enumerate}
\end{definition}

We now go to the notion of averaging samplers. Averaging samplers are used to derandomize the process of random sampling to estimate the average number of ones in a string $x \in \{0,1\}^n$, see survey of~\cite{goldreich-survey}. We use the expander sampler from~\cite{exp-samp, GoldreichW97} (Lemma 6.6 in the latter) and also described in~\cite{goldreich-survey}:

\begin{lemma}\label{lem:exp-samp}
The expander sampler with parameters $(\delta, \epsilon, N)$ is  an expander graph on $N$ vertices, such that the neighbors of a vertex $i$, specify a sample $S_i \subseteq [N]$. The set family $\mathcal{ES}(\delta,\epsilon,N) = \{S_i\}_{i=1}^{N}$ satisfies the following properties: 
\begin{enumerate}
    \item For all $i$, $\abs{S_i} = O\left(\frac{1}{\delta\epsilon^2}\right)$.
    \item For every $S_i$ the number of sets $S_j$ which intersect with it are $O\left(\frac{1}{\delta^2\epsilon^4}\right)$.
    \item For any string $x \in \{0, 1\}^N$, $\Pr\limits_{S \sim \mathcal{ES}(\delta,\epsilon,N)}\left[|\overline{(x|_{S})} - \overline{x}| > \epsilon\right] \leq \delta$, where $\overline{(x|_{S})}$ denotes the average of $x$ taken over the positions specified by $S$.
\end{enumerate}
\end{lemma}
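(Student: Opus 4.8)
The plan is to realize the set family $\mathcal{ES}(\delta,\epsilon,N)$ as the collection of vertex-neighborhoods of a sufficiently good explicit expander graph, and then to read off the three properties from (i) the degree of the graph, (ii) a count of $2$-step neighborhoods, and (iii) the expander mixing lemma. Concretely, I would fix a $d$-regular expander $G$ on $N$ vertices whose normalized second eigenvalue in absolute value satisfies $\lambda(G) \le \sqrt{\delta}\,\epsilon$; standard explicit constructions (Ramanujan graphs, or iterated zig-zag/replacement-product families) provide such graphs with $d = \Theta\!\left(1/(\delta\epsilon^2)\right)$, since these families achieve $\lambda(G) = O(1/\sqrt{d})$. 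Writing $\Gamma_G(i)$ for the neighborhood of vertex $i$, I set $S_i := \Gamma_G(i)$ for each $i \in [N]$. Property~1 is then immediate, as $|S_i| = d = O(1/(\delta\epsilon^2))$. (One must be mildly careful about which values of $N$ admit such a $G$ and pad/round $N$ up to the nearest admissible size, absorbing the discrepancy into the hidden constants — this is routine.)

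For property~2, observe that $S_i \cap S_j \ne \emptyset$ means $i$ and $j$ share a common neighbor in $G$, i.e.\ $j$ lies in the $2$-step neighborhood $\Gamma_G(\Gamma_G(i))$. Since $G$ is $d$-regular, $|\Gamma_G(\Gamma_G(i))| \le d^2$, so at most $d^2 = O(1/(\delta^2\epsilon^4))$ of the sets $S_j$ meet $S_i$ (this count includes $S_i$ itself).

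Property~3 is the spectral sampling bound. Let $A = \{i : x_i = 1\}$, so that $\overline{x} = |A|/N$, and let $M$ be the normalized adjacency operator of $G$, so that the $i$-th coordinate of $M\mathbf{1}_A$ equals $\overline{(x|_{S_i})}$. Decompose $\mathbf{1}_A = \overline{x}\cdot\mathbf{1} + w$ with $w \perp \mathbf{1}$ and $\|w\|_2^2 = \overline{x}(1-\overline{x})N \le \overline{x}\,N$. Since $M$ fixes $\mathbf{1}$ and $\|Mw\|_2 \le \lambda(G)\|w\|_2$ for every $w \perp \mathbf{1}$, we get $M\mathbf{1}_A = \overline{x}\,\mathbf{1} + Mw$ with $\|Mw\|_2^2 \le \lambda(G)^2\,\|w\|_2^2 \le \lambda(G)^2\,\overline{x}\,N \le \delta\epsilon^2 N$. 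Hence $\sum_{i=1}^N \big(\overline{(x|_{S_i})} - \overline{x}\big)^2 = \|Mw\|_2^2 \le \delta\epsilon^2 N$, and Markov's inequality applied to this sum of nonnegative terms shows that at most a $\delta$-fraction of indices $i$ satisfy $\big|\overline{(x|_{S_i})} - \overline{x}\big| > \epsilon$; interpreting $S \sim \mathcal{ES}(\delta,\epsilon,N)$ as $i$ uniform in $[N]$ with $S = S_i$, this is exactly the claimed inequality.

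The only genuinely delicate ingredient is the first step: one needs an explicit infinite family of $d$-regular graphs attaining the near-optimal spectral gap $\lambda(G) = O(1/\sqrt{d})$ and flexible enough to realize (nearly) every vertex count $N$, because both the degree bound of property~1 and the intersection bound of property~2 degrade if the spectral gap is only, say, $\lambda(G) = O(d^{-1/3})$. Everything after that — the $2$-step neighborhood count and the mixing-lemma-plus-Markov argument — is standard and short.
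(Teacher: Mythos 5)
Your proposal is correct: the spectral second-moment argument (decompose $\mathbb{1}_A = \overline{x}\cdot\mathbb{1}+w$, bound $\|Mw\|_2^2\le\lambda^2\overline{x}N$ with $\lambda\le\sqrt{\delta}\,\epsilon$, then average/Chebyshev over vertices) together with the $d^2$ two-step-neighborhood count and $d=\Theta(1/(\delta\epsilon^2))$ from a (near-)Ramanujan family is exactly the standard proof of this statement. The paper does not reprove the lemma — it imports it from Goldreich--Wigderson (Lemma 6.6) and Goldreich's survey, and only tunes $\lambda \le \poly(\epsilon,\delta)$ in Appendix A — so your argument matches the approach underlying the cited source, including the correct caveat that the degree and intersection bounds need $\lambda = O(1/\sqrt{d})$.
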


We analyze the expander sampler given above and prove that one can get a sampler with the following properties. In the appendix, we provide a detailed proof.

\begin{theorem}[Sampler]\label{samp}
For all constants $\epsilon,\delta,\gamma$, there exists a constant $C$ such that, there is a set family $\mathcal{S}(\epsilon,\delta,\gamma, N)= (S_i)_{i=1}^{N/2}$ on $[N]$ with the following properties:
 \begin{enumerate}
    \item \label{exp-samp-sound2} For any string $x \in \{0, 1\}^N$, $\Pr\limits_{S \sim \mathcal{S}}\left[|\overline{(x|_{S})} - \overline{x}| > \epsilon\right] \leq \delta$. 
    \item \label{exp-samp-comp2} For all $\eta < (1-\gamma)/2$, for any string $x \in \{0, 1\}^N$, where $\overline{x} \geq 1 - \eta$, we get that, ${\Pr\limits_{S \sim \mathcal{S}}\left[\overline{(x|_{S})} < \gamma\right] \leq \eta/2}$.
    \item \label{exp-samp-size2} For all $i$, $\abs{S_i} = C = O_{\epsilon, \delta, \gamma}(1)$. 
    \item The number of sets in $\mathcal{S}$ is $N/2$.
\end{enumerate}
\end{theorem}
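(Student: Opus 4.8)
The plan is to realize $\mathcal{S}$ as (half of) the neighbourhood family of a single explicit constant-degree spectral expander, and then to get Property~\ref{exp-samp-sound2} from the usual expander-sampler argument and the new Property~\ref{exp-samp-comp2} from a \emph{second} application of the expander mixing lemma that exploits the fact that a string with $\overline{x}\ge 1-\eta$ has very few zero-coordinates. Fix a small constant $\beta=\beta(\epsilon,\delta,\gamma)>0$ to be pinned down at the end, and let $G$ be an explicit $d$-regular graph on $N$ vertices all of whose nontrivial adjacency eigenvalues have absolute value at most $\beta d$; such graphs exist with $d=\Theta(1/\beta^{2})$ (these are exactly the expanders underlying Lemma~\ref{lem:exp-samp}, and up to the routine padding of $N$ to a nearby size for which an explicit construction is available we may take $G$ to have exactly $N$ vertices). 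Set $S_i:=N_G(i)$ and $\mathcal{S}(\epsilon,\delta,\gamma,N):=(S_i)_{i=1}^{N/2}$, the neighbourhoods of the first $N/2$ vertices. Then $\abs{S_i}=d=:C=O_{\epsilon,\delta,\gamma}(1)$ and there are $N/2$ sets, so Properties 3 and 4 hold immediately; everything else comes down to choosing $\beta$ small enough in terms of $\epsilon,\delta,\gamma$.

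For Property~\ref{exp-samp-sound2} I would run the standard argument. Recall the mixing lemma: for $A,B\subseteq[N]$, $\abs{e(A,B)-d\abs{A}\abs{B}/N}\le \beta d\sqrt{\abs{A}\abs{B}}$. Given $x$ with $\overline{x}=\mu$, let $Z$ be its set of $1$-coordinates, $\abs{Z}=\mu N$. If $B^{+}:=\{i:\overline{(x|_{S_i})}>\mu+\epsilon\}$ then $e(B^{+},Z)>(\mu+\epsilon)d\abs{B^{+}}$, and the mixing lemma forces $\abs{B^{+}}<\beta^{2}N/\epsilon^{2}$; the lower tail $B^{-}$ is symmetric. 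So the number of ``bad'' indices among all $N$ neighbourhoods is $<2\beta^{2}N/\epsilon^{2}$; restricting to the first $N/2$ can at most double the corresponding fraction, giving $\Pr_{S\sim\mathcal{S}}\big[\,\abs{\overline{(x|_S)}-\overline{x}}>\epsilon\,\big]<4\beta^{2}/\epsilon^{2}\le\delta$ as soon as $\beta\le\tfrac12\epsilon\sqrt{\delta}$.

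Property~\ref{exp-samp-comp2} is the one that needs a genuinely new estimate, and is where I expect essentially all of the work. Fix $\eta<(1-\gamma)/2$ and $x$ with $\overline{x}\ge 1-\eta$, and now let $Z$ be the set of \emph{zero}-coordinates, so $z:=\abs{Z}\le \eta N$. A set $S_i$ satisfies $\overline{(x|_{S_i})}<\gamma$ exactly when $\abs{S_i\cap Z}>(1-\gamma)d$, i.e. $e(\{i\},Z)>(1-\gamma)d$; collecting these indices into $B$ gives $e(B,Z)>(1-\gamma)d\abs{B}$. The mixing lemma yields $e(B,Z)\le d z\abs{B}/N+\beta d\sqrt{z\abs{B}}$, and since $z/N\le\eta<(1-\gamma)/2$ the first term is at most $\tfrac12(1-\gamma)d\abs{B}$ and can be absorbed, leaving $\tfrac12(1-\gamma)d\abs{B}<\beta d\sqrt{z\abs{B}}\le\beta d\sqrt{\eta N\abs{B}}$, hence $\abs{B}<4\beta^{2}\eta N/(1-\gamma)^{2}$. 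Dividing by $N/2$ and taking $\beta\le(1-\gamma)/4$ gives $\Pr_{S\sim\mathcal{S}}[\overline{(x|_S)}<\gamma]<\eta/2$. Taking $\beta:=\min\{\tfrac12\epsilon\sqrt{\delta},\,(1-\gamma)/4\}$ and $C:=\Theta(1/\beta^{2})$ then satisfies all four properties.

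The subtlety worth flagging is exactly why Property~\ref{exp-samp-comp2} is not trivial: the naive count — each bad set uses more than $(1-\gamma)d$ of the at most $\eta N$ zeros, and each zero lies in only $d$ neighbourhoods, so $\abs{B}(1-\gamma)d\le \eta N\cdot d$ — gives only $\abs{B}\le \eta N/(1-\gamma)$, a $\Theta(\eta/(1-\gamma))$ fraction of the sets, which is larger than the target $\eta/2$. One really does need expansion to beat this, and what matters downstream for the robust-circuit construction is that the bound degrades \emph{linearly} in $\eta$ rather than bottoming out at some fixed constant. The remaining steps — the padding of $N$, selecting the explicit near-Ramanujan family, and the arithmetic pinning down $\beta$ and $C$ — are all routine.
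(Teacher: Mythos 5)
Your proposal is correct and follows essentially the same route as the paper: expander neighbourhoods as the sample family, the expander mixing lemma (absorbing the main term via $\eta<(1-\gamma)/2$) to get the $O(\lambda^2\eta/(1-\gamma)^2)$ bound for Property~\ref{exp-samp-comp2}, and keeping only $N/2$ of the sets at the cost of a factor $2$ in the failure probabilities. The only cosmetic difference is that you re-derive Property~\ref{exp-samp-sound2} directly from the mixing lemma, whereas the paper cites the Goldreich--Wigderson expander-sampler analysis for that step.
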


\section{PCPs without perfect completeness}\label{sec:logn}

In this section we prove that PCPs with imperfect completeness can be converted to ones with perfect completeness with a mild blow-up in queries.

\subsection{Reductions with minimal Query Blow-up}

We first show a reduction that preserves the 
randomness complexity while losing an additive factor in the queries.

\begin{reminder}{Theorem~\ref{thm:pcp}} 
For all constants $c, s \in (0,1), s < c$, there exists a constant $s' \in (0,1)$, such that for all integer-valued functions $r(\cdot),q(\cdot)$, the following is true:
$$\PCP_{c, s}[r, q] \subseteq \PCP_{1,s'}[r, q +O_{s,c}(r)].$$ 

Furthermore if the original proof size was $n$, then the final proof size will be $n+O(2^r)$.\\
\end{reminder}
For notational simplicity we will prove that:
$$\PCP_{9/10,6/10}[r, q] \subseteq \PCP_{1,9/10}[r, q +O(r)],$$ with proof size $n+O(2^r)$. All constants that follow are universal constants, although in full generality, they only depend on $c,s$ that we have fixed to $(9/10,6/10)$.

The rest of this section is devoted to the proof of this theorem. The main idea here is to build a ``robust circuit'' of small depth, using threshold gates of small fan-in, over the proof oracle of the original PCP. We then ask the new prover to provide the original proof and along with that, also ask for what each gate in the circuit evaluates to, when provided the original clause evaluations as input. As discussed earlier, the circuit boosts the fraction of ones in every layer, for inputs $x$ that satisfy $\overline{x} \geq 9/10$, while maintaining the fraction of ones for inputs that satisfy $\overline{x} \leq 7/10$. We need to do this boosting step by step so that the fan-in does not blow up, and also need to use threshold gates that take ``random'' subsets of inputs from the previous layer, so that the ones in the input get distributed across all the gates. We get rid of the random subsets, by using any standard sampler over the gates of the previous layer.

Let us now describe the circuit more formally. Later we will give a way to get complete PCPs from incomplete ones using this circuit. 

\begin{mdframed}
Description of Circuit $\Gamma_m(\cdot)$:
\begin{itemize}
    \item The circuit has $d = \log{m}$ layers, $L_1,\ldots, L_d$, with layer $i$ composed of $w_i = m/2^{i}$ gates denoted by $L_{i1},\ldots,L_{i w_i}$. The zeroth layer $L_0$ is the $m$ inputs to the circuit.
    \item Every gate $L_{(i+1)j}$ is a threshold gate $\Thr{0.8}$. Let the set family given by the sampler from Theorem~\ref{samp} on $w_{i}$ nodes with parameters\\ $\mathcal{S}(1/10, 6/10, 8/10, w_{i}) = (S_{(i+1)j})_{j = 1}^{w_{i+1}}$. Let $L_{(i+1)j} = \Thr{0.8}(L_{i}|_{S_{(i+1)j}})$. By property~\ref{exp-samp-size2} of expander sampler fan-in $=\abs{S_{(i+1)j}} = O(1)$.
\end{itemize}
\end{mdframed}

We now use this circuit to give our main reduction.

\begin{proof}[Proof of Theorem~\ref{thm:pcp}]
Let $L \subseteq \Bool{*}$ be a language in $\PCP_{9/10,6/10}[r, q]$ via the proof system $\mathcal{P} = (\Pi, Q)$, where $\Pi$ and $Q$ denote the proof and the set of queries. We can now use the equivalence between MAX $q$-CSP$(c,s)$ and PCPs, to get a set of clauses $\mathcal{C} = \{C_1,\ldots, C_m\}$ of width $q$, for $m = 2^r$, such that $L \leq$ MAX $q$-$\mathcal{C}(9/10,6/10)$. (When $y \in L$, then there exists an assignment $x$, such that $9/10$-fraction of the clauses when evaluated on $x$ output $1$, whereas when $y \notin L$, for every assignment $x$, at most $6/10$ of the clauses evaluate to $1$.)

To prove the theorem, we will give a new proof system $\mathcal{P}' = (\Pi', Q')$ for $L$, that has perfect completeness and soundness equal to $9/10$. We will transform $\mathcal{P}$ using the circuit $\Gamma_m(\cdot)$ described above, to get $\mathcal{P}'$. We consider the circuit $\Gamma_m(C_1(\Pi), \ldots, C_m(\Pi))$ and ask the new prover to give one bit for every gate of the circuit. More precisely, we ask the new prover to give bits of $\Pi$ (interpreted as an assignment $x \in \Bool{n}$ for the MAX $q$-CSP: $\mathcal{C}$) and in addition gives bits for every layer in the circuit $\Gamma_m$:
$$\ell_i = \{\ell_{i1},\ldots, \ell_{iw_i}\}, \forall i \in \{0,1,\ldots,d\}.$$ 

These bits are supposed to correspond to a correct evaluation of the circuit $\Gamma_m$ when given $(C_1(x), \ldots, C_m(x))$ ($\Pi = x$) as input. That is, ideally the prover should give us, $\ell_{0j} = C_j(x), \forall j \in [m]$ and $\ell_{(i+1)j} = L_{(i+1)j}(\ell_{i}), \forall i \in [d],j \in w_i$, where $L_{(i+1)j}(\ell_{i})$ denotes the gate $L_{(i+1)j}$ evaluated on the output bit vector $\ell_i$ of the previous layer. We probabilistically test this using a new set of queries $Q'$, described below.

\paragraph{\bf Verifier Checks ($Q'$):}
For notational simplicity in describing the queries of the new verifier, we will do the following. For each layer $i$ (that has $m/2^i$ gates), consider $2^{i}$ copies of the set of gates $L_i$, and let this new set be denoted by $L_{i1}',\ldots,L_{im}'$ with corresponding proof bits by $\ell_i' = \{\ell_{i1}',\ldots,\ell_{im}'\}$ and each gate having its set of inputs $(S'_{i1}, \ldots, S'_{im})$. Note that this duplication of bits/gates is only for description of the queries, and the prover will only give $m/2^{i}$ bits for every layer $i$. 

Intuitively, we will check whether every gate is correct with respect to its immediate inputs (from the layer below it) and whether the final gate (on the topmost layer) evaluates to $1$. To do so, the verifier tosses $\log m$ random coins and on random string $j \in [m]$, it checks whether the following is true:
$$Q'_j := \left(C_j(x) \stackrel{?}{=} \ell'_{0j}\right) \land \left(L'_{1j}\left(\ell_{0}\right) \stackrel{?}{=} \ell'_{1j}\right) \ldots \land \left(L'_{d j}\left(\ell_{d-1}\right) \stackrel{?}{=} \ell'_{d j}\right) \land \ell'_{d j}$$
where the clause $(L'_{ij}(\ell_{i-1}) \stackrel{?}{=} \ell'_{ij})$ outputs 1 iff $(L'_{ij}(\ell_{i-1})$ equals $\ell'_{ij})$. 
As explained earlier, each of the clauses, checks whether the gate $L'_{ij}$ is correct, with respect to its input layer $\ell_{(i-1)}$. Notice here that each check $Q_j$, checks one gate in every layer and furthermore these checks are uniform across a layer, i.e. every gate in a layer is checked with the same probability.

To perform the check above, we query the proof bits  $\ell_{i-1}|_{S'_{ij}}$, making a constant number of queries, since the fanin of every gate is a fixed constant, i.e. $L_{ij}$ has fanin  $|S'_{ij}| = O(1)$. We then evaluate the threshold gate $L_{ij}$ on these bits and take the $\wedge$ across the layers. The check $(C_j(x) \stackrel{?}{=} \ell'_{0j})$ needs to query $q$ queries to $x$, hence the total number of queried proof bits is $q + O(\log m) = q + O(r)$. Further note that the randomness complexity of the verifier remains the same as before  i.e. $=r = \log m$.

We now prove the completeness and soundness of the protocol $\mathcal{P}'$.

\paragraph{Completeness:} If the original proof system $\mathcal{P}$ had completeness $9/10$, then there exists a proof $\Pi = x$ which satisfies $9/10$ of the clauses $\mathcal{C}$. The new prover can give us the bit vectors, $x$ and in addition the evaluations of the circuit $\Gamma(C_1(x), \ldots, C_m(x))$, i.e. $x, \ell_0 := (C_j(x))_{j=1}^m$ and $\ell_i := (L_{ij}(\ell_{i-1}))_{j = 1}^m$.

In Lemma~\ref{lem:comp-log}, we prove that, $\overline{\ell_i} \geq 1 - \frac{2^{-i}}{10}$. Since $d = \log m$ and the number of gates on level $d$ is $O(1)$, we get that the fraction of $1$s in $z_d$ is $\geq 1 - 1/m$, which gets rounded to $1$, since there is only one gate in the topmost layer. Since every query $Q'_j$ checks the consistency of a set of gates and if the bit $\ell_{dj} = 1$, we get completeness equals 1.

\paragraph{Soundness:} If the original proof system $\mathcal{P}$ had soundness $6/10$, then for all proofs $\Pi$ that the prover might give, $\Pi$ satisfies $\leq 6/10$ of the clauses $\mathcal{C}$. Let $\Pi' = (x, \ell_0,\ldots,\ell_{d})$ be the proof provided by the new prover. 

Let $z_0 := (C_j(x))_{j=1}^m$ and $z_{i+1} := (L_{(i+1)j}(\ell_{i}))_{i=1}^{w_i}$ be the true local evaluations. Note here that, $z_{i+1}$ is the evaluation bits of layer $L_{i+1}$ evaluated on the bits that the prover provides in the previous layer, $\ell_{i}$. By the soundness of $\mathcal{P}$ we get that $x$ satisfies at most $6/10$ of $\mathcal{C}$ which means that $\overline{z_0} \leq 6/10$. 

Now we have two cases: 
\begin{enumerate}
    \item The prover provided the bit vectors $\ell_i$ such that they agree with the true evaluations $z_i$ in most places, i.e.
    $$\forall i, \Pr\limits_{j \sim [w_i]}\left[\ell_{ij} \neq z_{ij}\right] \leq 1/10$$ 
    Hence we have that $\overline{\ell_0} \leq \overline{z_0} + 1/10 \leq 7/10$. Lemma~\ref{lem:soundness-log} gives us that for $\overline{\ell_i} \leq 7/10 $,  $\overline{L_{i+1}(\ell_i)} \leq 6/10$ and therefore $\overline{z_{i+1}} \leq 6/10$. Hence we get that by induction, for all $i$, $\overline{z_i} \leq 6/10$ and $\overline{\ell_i} \leq 7/10$, and more importantly $\overline{\ell_d} \leq 7/10$. Recall that our verifier checks are uniform over the every layer, and since $\ell_{dj} = 1$ is required for verifier's $j^{th}$ check, $Q_j$ to succeed, we get that soundness is $\leq 7/10$.
    
    \item There exists a layer $i \in \{0,\ldots,d\}$ such that:
    $$\Pr\limits_{j \sim [w_i]}\left[\ell_{ij} \neq z_{ij}\right] > 1/10.$$
    Since $z_{ij}$'s are the correct evaluations, the above implies that, the prover's proof will fail the local checks in $1/10$-fraction of the gates of layer $i$. 
    Since the verifier checks are uniform over the gates of every layer, (i.e. they check the gate of each layer with the same probability), the verifier checks the incorrect gates with probability at least $1/10$. Hence the soundness in this case is $\leq 9/10$.
\end{enumerate}  
Note that one of these cases has to occur, hence the overall soundness is the maximum of the two cases, i.e. $\leq 9/10$.

\paragraph{\bf Proof Length:} Every layer $L_i$ has width $m/2^i$. Thus the total number of gates in the circuit is $m+m/2 + \ldots = O(m) = O(2^r)$. Since $\Pi'$ consists of the original proof appended with the circuit evaluations, the proof length is $n+O(2^r)$.
\end{proof}

We now complete the proofs of completeness and soundness in Theorem~\ref{thm:pcp}.

\begin{lemma}[Completeness]\label{lem:comp-log}
Let $y_0 \in \Bool{m}$ be such that $\overline{y_0} \geq 9/10$. Let $y_i \in \Bool{w_i}$ denote the output string of layer $i$, when $\mathcal{C}$ is evaluated with the zeroth layer set to $y_0$. Then we have that for all $i$,  $y_i$ satisfies $\overline{y_i} \geq 1 - \frac{2^{-i}}{10}$.
\end{lemma}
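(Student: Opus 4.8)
The plan is to prove the statement by induction on the layer index $i$, using the completeness property (Property~\ref{exp-samp-comp2}) of the sampler from Theorem~\ref{samp}. The base case $i=0$ holds by hypothesis, since $\overline{y_0} \geq 9/10 = 1 - \frac{2^0}{10}$. For the inductive step, I would assume $\overline{y_i} \geq 1 - \frac{2^{-i}}{10}$ and show $\overline{y_{i+1}} \geq 1 - \frac{2^{-(i+1)}}{10}$, where $y_{i+1}$ is obtained from $y_i$ by applying the $w_{i+1}$ threshold gates $L_{(i+1)j} = \Thr{0.8}(y_i|_{S_{(i+1)j}})$ with the sampler sets $\mathcal{S}(1/10, 6/10, 8/10, w_i)$.

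The key observation is that a gate $L_{(i+1)j}$ evaluates to $0$ only if the fraction of ones in $y_i$ restricted to $S_{(i+1)j}$ is less than $0.8$; equivalently, writing $x$ for the complement string of $y_i$ (so $\overline{x} \leq \frac{2^{-i}}{10}$), the gate outputs $0$ only when $\overline{(x|_{S_{(i+1)j}})} > 0.2$, i.e. when the sampled average of $x$ exceeds $\gamma = 0.2$ — wait, I need to match the sampler's parameters. The sampler is invoked with $\gamma = 8/10$ applied to $y_i$ directly: a gate is $0$ iff $\overline{(y_i|_S)} < 0.8$, and since the threshold fraction $\gamma$ in Property~\ref{exp-samp-comp2} is exactly $8/10$, I set $\eta = \frac{2^{-i}}{10}$. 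I must check the precondition $\eta < (1-\gamma)/2 = 1/10$, which holds for all $i \geq 1$, and also address $i = 0$ separately (where $\eta = 1/10$ is the boundary; here one can note $\overline{y_0} \geq 9/10$ and argue the $i=1$ step directly, or observe the sampler statement can absorb the boundary case, or simply start the induction's use of the sampler at the step producing $y_1$ using the slightly stronger fact that for $i=0$, $\eta \le 1/10$ and the bound $\eta/2$ still gives what we need). Then Property~\ref{exp-samp-comp2} yields $\Pr_{S \sim \mathcal{S}}[\overline{(y_i|_S)} < 8/10] \leq \eta/2 = \frac{2^{-(i+1)}}{10}$. Since the fraction of zero-gates in layer $i+1$ equals the fraction of sampler sets $S$ with $\overline{(y_i|_S)} < 8/10$, we get $\overline{y_{i+1}} \geq 1 - \frac{2^{-(i+1)}}{10}$, completing the induction.

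The main obstacle I anticipate is the boundary condition $\eta < (1-\gamma)/2$ in Property~\ref{exp-samp-comp2}: at $i=0$ we have $\eta = 1/10$, which is not strictly less than $(1-8/10)/2 = 1/10$. I would handle this by either (a) strengthening the hypothesis slightly — note that in the application within Theorem~\ref{thm:pcp}, the completeness gives exactly $\overline{\ell_0} \geq 9/10$, so one needs the first application to go through — or (b) observing that the sampler guarantee in Theorem~\ref{samp} can be stated with a non-strict inequality, or (c) noting that even at the boundary the conclusion $\Pr[\overline{(x|_S)} < \gamma] \le \eta/2$ still holds by continuity/a limiting argument in the sampler's proof. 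In the write-up I would simply invoke Property~\ref{exp-samp-comp2} with $\eta = \frac{2^{-i}}{10}$ for each $i$, remarking that $\frac{2^{-i}}{10} \le \frac{1}{10} = (1-\gamma)/2$ with equality only at $i=0$, where the statement still applies; all other steps are a clean one-line induction with no heavy calculation.
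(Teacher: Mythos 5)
Your proposal is correct and follows essentially the same route as the paper: induction on the layer, invoking Property~\ref{exp-samp-comp2} of the sampler with $\gamma=8/10$ and $\eta=\frac{2^{-i}}{10}$ to bound the fraction of gates outputting $0$ by $\eta/2$, which yields $\overline{y_{i+1}}\geq 1-\frac{2^{-(i+1)}}{10}$. Your attention to the boundary case $\eta=(1-\gamma)/2=1/10$ at $i=0$ is in fact more careful than the paper, which applies the sampler property there without comment; any of your proposed fixes (absorbing the non-strict inequality into the sampler statement, or noting the slack in the expander-mixing bound) suffices.
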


\begin{proof}
We will prove the lemma by induction on $i$. Note that the base case $i=0$, holds trivially. Now consider the $(i+1)^{th}$ layer of the circuit and the gates $L_{(i+1)j}$ that take as input the set $S_{(i+1)j}$ corresponding to the expander sampler on $w_i$ bits. By the induction hypothesis we have that $y_i$ is such that $\overline{y_i} \geq 1 - \frac{2^{-i}}{10}$. By the expander sampler property~\ref{exp-samp-comp2} with parameters $(1/10, 6/10, 8/10, w_{i})$ we get that, 
$$\Pr\limits_{j \sim [w_{i+1}]}\left[L_{(i+1)j}(y_i) = \Thr{}(y_i|_{S_{(i+1)j}}) = 0\right] \leq \Pr\limits_{j \sim [w_{i+1}]}\left[\overline{(y_{i}|_{S_{(i+1)j}})} < 0.8\right] \leq \left(\frac{1}{2}\right)\left(\frac{2^{-i}}{10}\right).$$

Which directly implies
$$\Pr\limits_{j \sim [w_{i+1}]}\left[L_{(i+1)j}(y_i) = 1\right] \geq 1 - \frac{2^{-i-1}}{10} \Leftrightarrow \overline{y_{i+1}} \geq 1 - \frac{2^{-i-1}}{10}$$
which completes the induction.
\end{proof}

\begin{lemma}[Soundness]\label{lem:soundness-log} 
Let $y_{i} \in \Bool{w_i}$ denote an instantiation of the output gates of layer $i$ with $\overline{y_i} \leq 7/10$. Let $y_{i+1} = L_{i+1}(y_i)$ denote the output of layer $i+1$ when evaluated on the string $y_i$. Then we have that $y_{i+1}$ satisfies $\overline{y_{i+1}} \leq 6/10$.
\end{lemma}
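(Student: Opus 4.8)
The plan is to unwind the definition of the layer-$(i{+}1)$ map and then read the claim directly off the one-sided accuracy guarantee of the sampler family. By construction, layer $i{+}1$ consists of $w_{i+1}=w_i/2$ threshold gates $L_{(i+1)j}=\Thr{0.8}\bigl(y_i|_{S_{(i+1)j}}\bigr)$, where $(S_{(i+1)j})_{j=1}^{w_{i+1}}=\mathcal{S}(1/10,6/10,8/10,w_i)$ is the sampler of Theorem~\ref{samp} on $w_i$ nodes; since that family has exactly $w_i/2=w_{i+1}$ sets, sampling $j\sim[w_{i+1}]$ is the same as sampling $S\sim\mathcal{S}$. A threshold gate outputs $1$ precisely when its restricted average is at least $0.8$, so
$$\overline{y_{i+1}} \;=\; \Pr_{j\sim[w_{i+1}]}\bigl[L_{(i+1)j}(y_i)=1\bigr] \;=\; \Pr_{j\sim[w_{i+1}]}\Bigl[\,\overline{(y_i|_{S_{(i+1)j}})}\ge 0.8\,\Bigr].$$

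Next I would invoke the hypothesis $\overline{y_i}\le 7/10$: on any sampler set $S$ whose restricted average reaches $0.8$, that restricted average overestimates the true average $\overline{y_i}$ by at least $1/10$, so the event $\{\overline{(y_i|_S)}\ge 0.8\}$ is contained in the ``bad'' event $\{\,|\overline{(y_i|_S)}-\overline{y_i}|>1/10\,\}$. Applying property~\ref{exp-samp-sound2} of Theorem~\ref{samp} to the fixed string $x=y_i$ with parameters $\epsilon=1/10$, $\delta=6/10$ (we only need the upper half of the two-sided bound) shows this bad event has probability at most $\delta=6/10$, hence $\overline{y_{i+1}}\le 6/10$. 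This is precisely the soundness-side analogue of Lemma~\ref{lem:comp-log}, which invoked the completeness-side property~\ref{exp-samp-comp2} instead, and it is robust in the sense that $y_i$ need not be a consistent evaluation of the layer below: the sampler bound holds for every string, which is exactly what makes the circuit tolerant to adversarial corruptions.

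The one delicate point is the boundary between the non-strict threshold ``$\ge 0.8$'' defining the gate and the strict deviation ``$>1/10$'' in the sampler guarantee: if $\overline{y_i}$ were exactly $7/10$ and some restricted average were exactly $0.8$, that lone value would escape the containment. I expect this to be mere bookkeeping, resolvable in any routine way --- e.g.\ running the sampler with accuracy parameter a hair below $1/10$, or using that each $|S_{(i+1)j}|$ equals a fixed constant $C$ (property~\ref{exp-samp-size2}), so the restricted averages lie on the grid $\{k/C\}$ and $C$ can be chosen to avoid the coincidence --- none of which affects the $O(1)$ fan-in, the $O(1)$ sample size, or any stated constant. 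Beyond this, I anticipate no obstacle: the content of the lemma is exactly the sampler's one-sided accuracy bound.
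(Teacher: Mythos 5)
Your proof is correct and takes essentially the same route as the paper: both reduce the lemma to property~\ref{exp-samp-sound2} of Theorem~\ref{samp} with parameters $(\epsilon,\delta)=(1/10,6/10)$, observing that a $\Thr{0.8}$ gate can fire only when its restricted average deviates from $\overline{y_i}\leq 7/10$ by (at least) $1/10$, so the fraction of firing gates in layer $i+1$ is at most $\delta = 6/10$. The strict-versus-non-strict boundary you flag is indeed glossed over in the paper's proof, and your proposed fixes handle it without affecting any parameters.
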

\begin{proof}
Recall that in the circuit, the gate $L_{(i+1)j}$ took as input the set $S_{(i+1)j}$ corresponding to the sampler on $w_i$ bits. By the expander sampler property~\ref{exp-samp-sound2}, with parameters $(1/10, 6/10, 8/10, w_{i})$ we get that, for any string $y_i \in \Bool{w_i}$ with $\overline{y_i} \leq 7/10$:

$$\Pr\limits_{j \sim [w_{i+1}]}\left[|\overline{(y_i|_{S_{(i+1)j}})} - 7/10| > 1/10\right] \leq \Pr\limits_{j \sim [w_{i+1}]}\left[L_{(i+1)j}(y_i) = \Thr{}(y_i|_{S_{(i+1)j}}) = 1\right] \leq 6/10$$

which directly implies $\overline{y_{i+1}} \leq 6/10$ completing the proof.

\end{proof}

Theorem~\ref{thm:pcp} implies the following transformation from linear sized PCPs with imperfect completeness to linear sized PCPs with perfect completeness.
\begin{corollary}
If $NTIME[O(n)] \subseteq \PCP_{9/10,6/10}[\log n + O(1), q]$ then $NTIME[O(n)] \subseteq \PCP_{1,9/10}[\log n + O(1), q + O(\log n)]$. 
\end{corollary}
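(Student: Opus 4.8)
The plan is to apply Theorem~\ref{thm:pcp} essentially as a black box, being careful only about the bookkeeping of the randomness and proof-size parameters. First I would observe that the hypothesis $NTIME[O(n)] \subseteq \PCP_{9/10,6/10}[\log n + O(1), q]$ means that there is an $(r,q)$-restricted verifier with $r = \log n + O(1)$ coin tosses, completeness $9/10$ and soundness $6/10$, and with a proof of some length $n' = \poly(n)$ (whatever the underlying PCP construction gives; this does not matter). The key point is that Theorem~\ref{thm:pcp} applied with $c = 9/10$, $s = 6/10$ (the explicit constants treated in the proof) yields $\PCP_{9/10,6/10}[r,q] \subseteq \PCP_{1,9/10}[r, q + O(r)]$, with the proof size growing from $n'$ to $n' + O(2^r)$.

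Next I would simply substitute $r = \log n + O(1)$ into the conclusion. The additive query blow-up $O(r)$ becomes $O(\log n + O(1)) = O(\log n)$, so the new query complexity is $q + O(\log n)$, exactly as claimed. The randomness complexity is preserved, so it remains $\log n + O(1)$. Finally, the proof length grows by an additive $O(2^r) = O(2^{\log n + O(1)}) = O(n)$ term, so the total proof length is $n' + O(n)$, which is still polynomial (indeed, if the starting PCP were linear-sized one would get a linear-sized PCP again, which is the point of the surrounding discussion). Chaining these inclusions gives $NTIME[O(n)] \subseteq \PCP_{1,9/10}[\log n + O(1), q + O(\log n)]$.

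There is essentially no obstacle here beyond checking that the $O(\cdot)$ inside the argument of $2^r$ does not cause trouble: since $r = \log n + c_0$ for a constant $c_0$, we have $2^r = 2^{c_0} n = O(n)$, so the constant in the exponent is harmless. One should also note that the soundness constant $9/10$ in the conclusion is the specific $s'$ produced by the worked instance of Theorem~\ref{thm:pcp}; if one wanted an arbitrarily small soundness constant instead, one would follow the remark after Theorem~\ref{thm:pcp} and apply derandomized serial repetition, $\PCP_{1,9/10}[r,q'] \subseteq \PCP_{1,s''}[r,O(q')]$, which multiplies the query count by a constant (keeping it $q + O(\log n)$ up to the constant) and leaves the proof size unchanged — but for the corollary as stated, soundness $9/10$ suffices and no further step is needed.
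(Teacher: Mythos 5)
Your proposal is correct and matches the paper's intent exactly: the corollary is stated as an immediate consequence of Theorem~\ref{thm:pcp}, obtained by plugging in $r = \log n + O(1)$ so that the query blow-up $O(r)$ becomes $O(\log n)$ and the proof-size increase $O(2^r)$ becomes $O(n)$. Your parameter bookkeeping (randomness preserved, soundness $9/10$ from the worked instance) is the same reasoning the paper relies on.
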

\section{Randomized reductions between PCPs}\label{sec:loglog}

In this section we prove that PCPs with imperfect completeness can be reduced using randomness to ones with perfect completeness with a lesser blow-up in queries compared to Section~\ref{sec:logn}. We construct a  circuit similar to the one in the previous section, but this time we use a randomized circuit to get better parameters and show that our reduction works with high probability. 
\subsection{Randomized Reductions with minimal Query Blow-up}

\begin{reminder}{Theorem~\ref{thm:rand-red}} 
For all constants $c, s \in (0,1), s < c$, there exists a constant $s' \in (0,1)$, such that for all integer-valued functions $r(\cdot),q(\cdot)$, the following is true:
$$\PCP_{c, s}[r, q] \leq_R \PCP_{1,s'}[r, q +O_{s,c}(\log r)].$$ 

Furthermore if the original proof size was $n$, then the final proof size will be $n+O(2^r)$.\\
\end{reminder}
For notational simplicity we will prove that:
$$\PCP_{9/10,6/10}[r, q] \leq_R \PCP_{1,9/10}[r, q +O(\log r)],$$ with proof size $n+O(2^r)$. All constants that follow are universal constants, although in full generality, they only depend on $c,s$ that we have fixed to $(9/10,6/10)$.

This immediately implies the following corollary using the query reduction\footnote{This result reduces queries to a constant but blows-up the proof size.} result by Dinur~\cite{dinur},
\begin{corollary}
If there exists a $T(n)$ time algorithm for MAX 3SAT$(1,1-\delta)$ for all $\delta > 0$ then there exists a $T(n(\log\log n)^{O(1)})$ time randomized algorithm for MAX 3SAT$(1-\epsilon,1-\gamma)$ for all $\epsilon,\gamma, 0< \epsilon < \gamma$. 
\end{corollary}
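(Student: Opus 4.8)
The plan is to compose Theorem~\ref{thm:rand-red} with the query-reduction theorem of Dinur~\cite{dinur}. Fix constants $0<\epsilon<\gamma$ and let $\phi$ be a MAX 3SAT$(1-\epsilon,1-\gamma)$ instance on $n$ variables and $m=O(n)$ clauses (the linear-clause regime). Via the standard CSP--PCP correspondence we may view $\phi$ as the input handed to a $\PCP_{1-\epsilon,1-\gamma}[r,3]$ verifier with $r=\log m=\log n+O(1)$ random bits and proof size $n$ (the proof being an assignment). Since $\gamma-\epsilon=\Omega(1)$, Theorem~\ref{thm:rand-red} applies.

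First I would invoke Theorem~\ref{thm:rand-red} to obtain, in randomized polynomial time, an instance in the $\PCP_{1,s'}[r,\,3+O(\log r)]$ form for some constant $s'\in(0,1)$, of proof size $n+O(2^r)=O(n)$, that correctly reduces $\phi$ with probability $1-2^{-\Omega(r)}=1-1/\poly(n)$. Reading this back as a CSP, it is a MAX $q'$-CSP$(1,s')$ instance $\psi$ with $O(n)$ variables, $O(n)$ clauses, width $q'=3+O(\log r)=O(\log\log n)$, and perfect completeness.

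Next I would apply Dinur's query reduction to $\psi$: compose each width-$q'$ constraint with a fixed assignment tester (PCP of proximity). This drives the query complexity down to a constant, preserves perfect completeness, keeps the soundness at some constant $1-\delta'<1$, increases the randomness by $O(\log q')$, and — crucially — blows up the proof size and the number of constraints only by a factor polynomial in the circuit complexity of one constraint. Each constraint of $\psi$ is a conjunction of $O(\log r)$ predicates of constant arity, hence has circuit size $O(q')$, so the blow-up is $(q')^{O(1)}=(\log\log n)^{O(1)}$. Re-encoding the resulting $O(1)$-ary acceptance predicate as a constant number of $3$-clauses yields a MAX 3SAT$(1,1-\delta')$ instance with $N:=n(\log\log n)^{O(1)}$ variables and clauses, still correctly encoding $\phi$ with probability $1-1/\poly(n)$. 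Finally, run the hypothesized $T(\cdot)$-time algorithm for MAX 3SAT$(1,1-\delta')$ on this instance (it exists because the hypothesis is assumed for every constant $\delta>0$) and output its verdict. Since all reductions run in near-linear time, the total running time is $\tilde{O}(n)+T\!\big(n(\log\log n)^{O(1)}\big)=T\!\big(n(\log\log n)^{O(1)}\big)$, and the verdict is correct with probability $1-1/\poly(n)$ (amplify by independent repetition), giving the claimed randomized algorithm for MAX 3SAT$(1-\epsilon,1-\gamma)$.

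I expect the main obstacle to be ensuring that the query (width) reduction preserves a \emph{constant} gap. The naive route — writing each width-$q'$ constraint as an equivalent $3$-CNF via Tseitin variables — spreads a single violated constraint over $\Theta(q')=\Theta(\log\log n)$ new $3$-clauses, which collapses the gap to $1-\Theta(\delta'/\log\log n)=1-o(1)$, too weak to feed into the assumed constant-gap algorithm. Circumventing this forces a genuine PCP-composition-style query reduction with only constant soundness loss, and one must verify that the attendant proof-size blow-up is $\poly(q')$ rather than $2^{\Theta(q')}$; this uses precisely the fact that the constraints produced by Theorem~\ref{thm:rand-red} are simple formulas of size $O(q')$. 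The remaining bookkeeping — tracking that the sole randomized ingredient is Theorem~\ref{thm:rand-red}, with its $1/\poly(n)$ error — is routine.
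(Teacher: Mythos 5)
Your overall plan---apply Theorem~\ref{thm:rand-red} and then invoke Dinur to bring the width $q'=O(\log\log n)$ back down to a constant-gap 3SAT instance of size $n(\log\log n)^{O(1)}$---is exactly the route the paper takes (the paper's proof is a one-line appeal to Dinur's query reduction). However, the specific mechanism you propose for the query-reduction step does not work. Composing each width-$q'$ constraint with an assignment tester (PCP of proximity) preserves a \emph{constant} soundness gap only if the outer verifier is \emph{robust}, i.e.\ if on a cheating proof the queried local view is a constant fraction far (in relative Hamming distance) from every accepting view. The verifier produced by Theorem~\ref{thm:rand-red} is not robust: consider a NO instance where the prover honestly writes $x$ and the true gate evaluations $\ell_i=z_i$. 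Then for a constant fraction of $j$ the check $Q'_j$ fails, but only because $\ell'_{dj}=0$; flipping $\ell'_{dj}$ together with $O(1)$ of the fan-in bits feeding the top gate yields an accepting view, so the local view is within relative distance $O(1/q')=O(1/\log\log n)$ of an accepting one. An assignment tester rejects with probability at most proportional to this distance, so your composed verifier rejects with probability $O(1/\log\log n)$, and the instance you feed to the assumed algorithm is MAX 3SAT$(1,1-o(1))$, not MAX 3SAT$(1,1-\delta')$ --- exactly the collapse you identified for the Tseitin route and tried to sidestep. In other words, per-constraint PCPP composition buys nothing over the naive 3-CNF conversion unless you first robustify the outer PCP, a nontrivial step your argument does not supply.

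The repair (and the intended reading of ``query reduction by Dinur'') is to embrace the sub-constant gap and then amplify it. Convert the width-$q'$ CSP from Theorem~\ref{thm:rand-red} directly into 3SAT: each constraint is a conjunction of $O(q')$ predicates of constant arity, so this costs $O(q')$ clauses per constraint, preserves perfect completeness, and yields gap $\Omega(1/q')=\Omega(1/\log\log n)$ on an instance of size $O(n\log\log n)$. Now apply Dinur's gap amplification~\cite{dinur}: each round doubles the gap, multiplies the size by a constant, and preserves perfect completeness, so after $O(\log\log\log n)$ rounds one obtains a MAX 3SAT$(1,1-\delta')$ instance, $\delta'$ a constant, of size $n(\log\log n)^{O(1)}$, to which the hypothesized $T(\cdot)$-time algorithm is applied. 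Your remaining bookkeeping (viewing the MAX 3SAT$(1-\epsilon,1-\gamma)$ instance with $m=O(n)$ clauses as a $\PCP_{1-\epsilon,1-\gamma}[\log n+O(1),3]$ verifier, the $1-2^{-\Omega(r)}$ success probability of the randomized reduction, and the running-time accounting) is fine once this step is fixed.
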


The rest of this section is devoted to the proof of theorem~\ref{thm:rand-red}. The main idea as in Theorem~\ref{thm:pcp} is to build a ``robust circuit'' of small depth, using threshold gates of small fan-in, over the proof oracle of the original PCP. We then ask the new prover to provide the original proof and along with that, also ask for what each gate in the circuit evaluates to, when provided the original clause evaluations as input. As discussed earlier, the circuit boosts the fraction of ones in every layer, for inputs $x$ that satisfy $\overline{x} \geq 9/10$, while maintaining the fraction of ones for inputs that satisfy $\overline{x} \leq 7/10$. We need to do this boosting step by step so that the fan-in does not blow up, and also need to use threshold gates that take random subsets of inputs from the previous layer, so that the ones in the input get distributed across all the gates.

Let us now describe the circuit more formally. Later we will give a way to get complete PCPs from incomplete ones using this circuit. 

\begin{mdframed}
Description of Circuit $\Gamma_m(\cdot)$:
\begin{itemize}
    \item The circuit has $d = \log\log{m}$ layers, $L_1,\ldots, L_d$, with layer $i$ composed of $w_i = m/2^{i}$ gates denoted by $L_{i1},\ldots,L_{i w_i}$. The zeroth layer $L_0$ is the $m$ inputs to the circuit.
    \item Every gate $L_{ij}$ is a threshold gate $\Thr{0.8}$. A gate $L_{ij}$ takes as inputs a random set of $f = O(1)$ gates from the previous layer $L_{i-1}$, i.e. we pick a uniformly random set $S_{ij}$ of size $f$, (sampled with replacement) from $[m/2^{i-1}]$ and connect gate $L_{ij}$ with gates $L_{(i-1)k}, \forall k \in S_{ij}$.
\end{itemize}
\end{mdframed}

We now use this circuit to give our main reduction.

\begin{proof}[Proof of Theorem~\ref{thm:rand-red}]
Let $L \subseteq \Bool{*}$ be a language in $\PCP_{9/10,6/10}[r, q]$ via the proof system $\mathcal{P} = (\Pi, Q)$, where $\Pi$ and $Q$ denote the proof and the set of queries. We can now use the equivalence between MAX $q$-CSP$(c,s)$ and PCPs to get a set of clauses $\mathcal{C} = \{C_1,\ldots, C_m\}$ of width $q$, for $m = 2^r$, such that $L \leq$ MAX $q$-$\mathcal{C}(9/10,6/10)$. (When $y \in L$, then there exists an assignment $x$, such that $9/10$-fraction of the clauses when evaluated on $x$ output $1$, whereas when $y \notin L$, for every assignment $x$, at most $6/10$ of the clauses evaluate to $1$.)

To prove the theorem, we will give a new proof system $\mathcal{P}' = (\Pi', Q')$ for $L$, that has perfect completeness and soundness equal to $9/10$. We will transform $\mathcal{P}$ using the circuit $\Gamma_m(\cdot)$ described above, to get $\mathcal{P}'$. We consider the circuit $\Gamma_m(C_1(\Pi), \ldots, C_m(\Pi))$ and ask the new prover to give one bit for every gate of the circuit. More precisely, we ask the new prover to give bits of $\Pi$ (interpreted as an assignment $x \in \Bool{n}$ for the MAX $q$-CSP: $\mathcal{C}$) and in addition gives bits for every layer in the circuit $\Gamma_m$:
$$\ell_i = \{\ell_{i1},\ldots, \ell_{iw_i}\}, \forall i \in \{0,1,\ldots,d\}.$$ 

These bits are supposed to correspond to a correct evaluation of the circuit $\Gamma_m$ when given $(C_1(x), \ldots, C_m(x))$ ($\Pi = x$) as input. That is, ideally the prover should give us, $\ell_{0j} = C_j(x), \forall j \in [m]$ and $\ell_{(i+1)j} = L_{(i+1)j}(\ell_{i}), \forall i \in [d],j \in w_i$, where $L_{(i+1)j}(\ell_{i})$ denotes the gate $L_{(i+1)j}$ evaluated on the output bit vector $\ell_i$ of the previous layer. We probabilistically test this using a new set of queries $Q'$, described below.

\paragraph{\bf Verifier Checks ($Q'$):}
For notational simplicity in describing the queries of the new verifier, we will do the following. For each layer $i$ (that has $m/2^i$ gates), consider $2^{i}$ copies of the set of gates $L_i$, and let this new set be denoted by $L_{i1}',\ldots,L_{im}'$ with corresponding proof bits by $\ell_i' = \{\ell_{i1}',\ldots,\ell_{im}'\}$ and each gate having its set of inputs $(S'_{i1}, \ldots, S'_{im})$. Note that this duplication of bits/gates is only for description of the queries, and the prover will only give $m/2^{i}$ bits for every layer $i$. 

Intuitively, we will check whether every gate is correct with respect to its immediate inputs (from the layer below it) and whether the final gate (on the topmost layer) evaluates to $1$. To do so, the verifier tosses $\log m$ random coins and on random string $j \in [m]$, it checks whether the following is true:
$$Q'_j := (C_j(x) \stackrel{?}{=} \ell'_{0j}) \land (L'_{1j}(\ell_{0}) \stackrel{?}{=} \ell'_{1j}) \ldots \land (L'_{d j}(\ell_{d-1}) \stackrel{?}{=} \ell'_{d j}) \land \ell'_{d j},$$
where the clause $(L'_{ij}(\ell_{i-1}) \stackrel{?}{=} \ell'_{ij})$ outputs 1 iff $(L'_{ij}(\ell_{i-1})$ equals $\ell'_{ij})$. 
As explained earlier, each of the clauses, checks whether the gate $L'_{ij}$ is correct, with respect to its input layer $\ell_{(i-1)}$. Notice here that each check $Q_j$, checks one gate in every layer and furthermore these checks are uniform across a layer, i.e. every gate in a layer is checked with the same probability.

To perform the check above, we query the proof bits  $\ell_{i-1}|_{S'_{ij}}$, making a constant number of queries, since the fanin of every gate is a fixed constant, i.e. $L_{ij}$ has fanin  $|S'_{ij}| = O(1)$. We then evaluate the threshold gate $L_{ij}$ on these bits and take the $\wedge$ across the layers. The check $(C_j(x) \stackrel{?}{=} \ell'_{0j})$ needs to query $q$ queries to $x$, hence the total number of queried proof bits is $q + O(\log\log m) = q + O(\log r)$. Further note that the randomness complexity of the verifier remains the same as before, $=r = \log m$.

We now prove the completeness and soundness of the protocol $\mathcal{P}'$. Since the reduction is randomized, this boils down to proving that, 1) Completeness: given a Max $q$-CSP that was $c$-satisfiable, with high probability it gets mapped to a Max $q'$-CSP that is perfectly satisfiable and 2) Soundness: given a Max $q$-CSP that was at most $s$-satisfiable, with high probability it gets mapped to a Max $q'$-CSP that is at most $s'$-satisfiable.

\paragraph{Completeness:} If the original proof system $\mathcal{P}$ had completeness $9/10$, then there exists a proof $\Pi = x$ which satisfies $9/10$ of the clauses $\mathcal{C}$. The new prover can give us the bit vectors, $x$ and in addition the evaluations of the circuit $\Gamma(x)$, i.e. $x, \ell_1 := (C_j(x))_{j=1}^m$ and $\ell_i := (L_{ij}(\ell_{i-1}))_{j = 1}^m$.
In Lemma~\ref{lem:comp-loglog}, we prove that with probability $\geq 1-1/m^{1/4}$, $\overline{\ell_d} = 1$. Since every query $Q'_j$ checks the consistency of a set of gates and if the bit $\ell_{dj} = 1$ we get that with probability $1-1/m^{1/4} = 1-2^{-\Omega(r)}$, completeness equals 1.

\paragraph{Soundness:} 

We will call a circuit $\Gamma_m(\mathcal{C})$ ``good'' if the following property holds:

For all layers $i$, $\forall \ell_i \in \{0,1\}^{w_i}$ such that $\overline{\ell_i} \leq 7/10$, the circuit is such that $\overline{L_{i+1}(\ell_i)} \leq 6/10$. (Recall that $L_{i+1}(z)$ denotes the output of layer $L_{i+1}$ when evaluated on the string $z$.)

Lemma~\ref{lem:comp-loglog} gives us that,
$$\Pr[\forall \ell_i \text{ with } \overline{\ell_i} \leq 7/10, \overline{L_{i+1}(\ell_i)} \leq 6/10] \geq 1 - 2^{-m/2^{i}}$$

Taking a union bound over the layers of the circuit, we get that,
\begin{align*}
\Pr[\Gamma_m(\mathcal{C}) \text{ is good }] &= \Pr[\forall i, \forall \ell_i \text{ with } \overline{\ell_i} \leq 7/10, \overline{L_{i+1}(\ell_i)} \leq 6/10]\\
&\geq 1 - (\log\log m)2^{-m/2^{d}} \\
&\geq 1-2^{-\sqrt{m}} = 1-2^{-\Omega(r)}
\end{align*}
We will now show that if the randomized circuit $\Gamma_m(\mathcal{C})$ is good then the new PCP is sound. Since the circuit is good with high probability, showing this is enough to complete the randomized reduction claimed in Theorem~\ref{thm:rand-red}.

From now on, we will assume that the circuit is good. If the original proof system $\mathcal{P}$ had soundness $6/10$, then for all proofs $\Pi$ that the prover might give, $\Pi$ satisfies $\leq 6/10$ of the clauses $\mathcal{C}$. Let $\Pi' = (x, \ell_0,\ldots,\ell_{d})$ be the proof provided by the new prover. 

Let $z_0 := (C_j(x))_{j=1}^m$ and $z_{i+1} := (L_{(i+1)j}(\ell_{i}))_{i=1}^{w_i}$ be the true local evaluations. Note here that, $z_{i+1}$ is the evaluation bits of layer $L_{i+1}$ evaluated on the bits that the prover provides in the previous layer, $\ell_{i}$. By the soundness of $\mathcal{P}$ we get that $x$ satisfies at most $6/10$ of $\mathcal{C}$ which means that $\overline{z_0} \leq 6/10$. 

Now we have two cases: 
\begin{enumerate}
    \item The prover provided the bit vectors $\ell_i$ such that they agree with the true evaluations $z_i$ in most places, i.e.
    $$\forall i, \Pr\limits_{j \sim [w_i]}[\ell_{ij} \neq z_{ij}] \leq 1/10.$$ 
    Hence we have that $\overline{\ell_0} \leq \overline{z_0} + 1/10 \leq 7/10$. Lemma~\ref{lem:soundness-loglog} gives us that for $\overline{\ell_i} \leq 7/10 $,  $\overline{L_{i+1}(\ell_i)} \leq 6/10$ and therefore $\overline{z_{i+1}} \leq 6/10$. Hence we get that by induction, for all $i$, $\overline{z_i} \leq 6/10$ and $\overline{\ell_i} \leq 7/10$, and more importantly $\overline{\ell_d} \leq 7/10$. Recall that our verifier checks are uniform over the every layer, and since $\ell_{dj} = 1$ is required for verifier's $j^{th}$ check, $Q_j$ to succeed, we get that soundness is $\leq 7/10$.
    
    \item There exists a layer $i \in \{0,\ldots,d\}$ such that:
    $$\Pr\limits_{j \sim [w_i]}[\ell_{ij} \neq z_{ij}] > 1/10.$$
    Since $z_{ij}$'s are the correct evaluations, the above implies that, the prover's proof will fail the local checks in $1/10$-fraction of the gates of layer $i$. 
    Since the verifier checks are uniform over the gates of every layer, (i.e. they check the gate of each layer with the same probability), the verifier checks the incorrect gates with probability at least $1/10$. Hence the soundness in this case is $\leq 9/10$.
\end{enumerate}  
Note that one of these cases has to occur, hence the overall soundness is the maximum of the two cases, i.e. $\leq 9/10$.

\paragraph{\bf Proof Length:} Every layer $L_i$ has width $m/2^i$. Thus the total number of gates in the circuit is $m+m/2 + \ldots = O(m) = O(2^r)$. Since $\Pi'$ consists of the original proof appended with the circuit evaluations, the proof length is $n+O(2^r)$.
\end{proof}

We now complete the proofs of completeness and soundness claims used in the proof of Theorem~\ref{thm:rand-red}.

\begin{lemma}[Completeness]\label{lem:comp-loglog}
Let $y_0 \in \Bool{m}$ be such that $\overline{y_0} \geq 9/10$. Let $y_i \in \Bool{w_i}$ denote the output string of layer $i$, when $\mathcal{C}$ is evaluated on $y_0$. Then we have that with probability $\geq 1-1/m^{1/4}$ for all $i$,  $y_i$ satisfies $\overline{y_i} \geq 1 - \left(\frac{1}{10}\right)^{2^{i}}$ and hence $\overline{y_d} = 1$.
\end{lemma}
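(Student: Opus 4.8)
The plan is to prove by induction on $i$ that the stated bound $\overline{y_i} \geq 1 - (1/10)^{2^i}$ holds for every layer, where now the randomness of the circuit (the random choice of the sets $S_{ij}$) must be handled via a Chernoff-plus-union-bound argument rather than the deterministic sampler guarantee used in Lemma~\ref{lem:comp-log}. The base case $i=0$ is immediate from $\overline{y_0} \geq 9/10 \geq 1 - 1/10$. For the inductive step, fix a layer $i+1$ and condition on the event that $\overline{y_i} \geq 1 - (1/10)^{2^i} =: 1 - \eta_i$; write $\eta_{i+1} := (1/10)^{2^{i+1}} = \eta_i^2$. Each gate $L_{(i+1)j}$ reads a uniformly random multiset $S_{(i+1)j}$ of $f = O(1)$ coordinates of $y_i$ (sampled with replacement), so the probability that a fixed gate evaluates to $0$ is at most $\Pr[\text{some sampled coordinate is } 0] \leq f\eta_i$. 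This is the wrong direction — I want to say that the \emph{fraction} of zero gates in layer $i+1$ is small, so the right move is: a gate $L_{(i+1)j}$ outputs $0$ only if at least a $0.2$-fraction of its $f$ inputs are zero-coordinates of $y_i$; since those $f$ draws are independent Bernoulli$(\eta_i)$, by the multiplicative Chernoff bound (bound~\ref{c:3}, the regime $\delta \geq 2$, using $0.2 \geq 2\eta_i$ once $\eta_i \leq 1/10$) this probability is at most $\eta_i^{\Omega(f \log(1/\eta_i))}$, which for a suitable constant $f$ is at most $\eta_i^2 / 2 = \eta_{i+1}/2$.

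Next I would turn this per-gate bound into a statement about $\overline{y_{i+1}}$. The gates in layer $i+1$ use \emph{independent} random input sets, so the indicator random variables $\mathbb{1}[y_{(i+1)j} = 0]$ are independent across $j \in [w_{i+1}]$, each with mean at most $\eta_{i+1}/2$. The expected fraction of zero gates is therefore at most $\eta_{i+1}/2$, and by another application of the multiplicative Chernoff bound (bound~\ref{c:2}) the fraction of zero gates exceeds $\eta_{i+1}$ with probability at most $\exp(-\Omega(\eta_{i+1} w_{i+1}))$. Since $w_{i+1} = m/2^{i+1}$ and $\eta_{i+1} = (1/10)^{2^{i+1}}$, for $i+1 \leq d = \log\log m$ we have $\eta_{i+1} \geq (1/10)^{2^{\log\log m}} = (1/10)^{\log m} = m^{-\log 10}$, which is only mildly small, while $w_{i+1} \geq m / 2^{\log\log m} = m/\log m$, so the product $\eta_{i+1} w_{i+1}$ is polynomially large in $m$; thus this failure probability is at most, say, $m^{-1/2}/d$ per layer for $m$ large. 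Taking a union bound over the $d = \log\log m$ layers gives that, except with probability at most $m^{-1/2} \leq m^{-1/4}$, every layer satisfies $\overline{y_i} \geq 1 - \eta_i$, completing the induction. Finally, at layer $d$ we have $\eta_d = (1/10)^{2^d} = (1/10)^{\log m} = m^{-\log 10} < 1/w_d$ (since $w_d = m/2^d = m/\log m$ and $m^{\log 10}$ grows faster than $m/\log m$), so $\overline{y_d} \geq 1 - \eta_d > 1 - 1/w_d$ forces $\overline{y_d} = 1$, i.e., every gate in the last layer outputs $1$.

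The main obstacle I anticipate is bookkeeping the two competing exponentials correctly: the error per gate must shrink doubly-exponentially (like $\eta_i^2$) to keep the induction self-sustaining, and this requires the fan-in constant $f$ to be chosen large enough that the Chernoff bound in the $\delta \geq 2$ regime delivers a power of $\eta_i$ strictly larger than $1$ (squaring suffices, and any exponent $> 1$ would do with a reindexing of the target bound) — but $f$ must stay a universal constant independent of $m$ and of the layer index, which works precisely because $\log(1/\eta_i) \geq \log 10 > 0$ uniformly, so the exponent $\Omega(f \log(1/\eta_i))$ is bounded below by $\Omega(f)$. The second delicate point is checking that the concentration of $\overline{y_{i+1}}$ around its mean does not degrade across layers: since $\eta_{i+1} w_{i+1}$ stays $\mathrm{poly}(m)$ throughout (this is exactly why the construction uses only $d = \log\log m$ layers rather than $\log m$ as in the deterministic version), each layer's failure probability is $m^{-\Omega(1)}$ and the union bound over $d \ll m$ layers is harmless. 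The soundness companion, Lemma~\ref{lem:soundness-loglog}, is a similar but easier union bound over all $2^{w_i}$ possible inputs $\ell_i$ of bounded weight, which is why its failure probability is stated as $2^{-m/2^i}$ rather than inverse-polynomial.
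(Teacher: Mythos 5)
Your outline matches the paper's strategy (per-gate Chernoff with constant fan-in $f$, then concentration over the $w_{i+1}$ independent gates, union bound over layers, and the final rounding $\eta_d<1/w_d\Rightarrow\overline{y_d}=1$), but the layer-level concentration step as you wrote it fails, and the point you flagged as ``delicate'' is resolved by a false claim. Writing $\eta_i=(1/10)^{2^i}$, with $d=\log\log m$ we have $2^d=\log m$, so at the top layers $\eta_{i+1}$ is as small as $(1/10)^{\log m}=m^{-\log_2 10}\leq m^{-2}$, while $w_{i+1}\leq m$; hence $\eta_{i+1}w_{i+1}$ is \emph{not} polynomially large --- it tends to $0$ (e.g.\ $\eta_{d-1}w_{d-1}\approx m^{-0.66}$ and $\eta_d w_d\approx m^{-2.3}$, up to $\log$ factors). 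For any layer with $\eta_{i+1}\ll(\log m)/m$ the bound $\exp(-\Omega(\eta_{i+1}w_{i+1}))$ is $1-o(1)$, i.e.\ vacuous, so your claimed per-layer failure probability $m^{-1/2}/d$ does not follow and the union bound collapses. (Your arithmetic slip is treating $m^{-\log 10}$ as ``only mildly small''; $\log_2 10>3$.) There is also a minor slip earlier: the per-gate Chernoff bound gives $\exp(-\Omega(f\log(1/\eta_i)))=\eta_i^{\Omega(f)}$, not $\eta_i^{\Omega(f\log(1/\eta_i))}$ --- harmless for your intended conclusion, but relevant below.

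The paper closes exactly this regime with a second, first-moment bound used alongside Chernoff. Choosing $f$ large enough that a single gate misfires with probability $O(\eta_i^3)$ (not merely $\eta_i^2/2$), Markov's inequality gives $\Pr[\,1-\overline{y_{i+1}}\geq\eta_i^2\,]\leq O(\eta_i^3)/\eta_i^2=O(\eta_i)$, which is small precisely in the regime where the Chernoff exponent $\eta_i^2 w_{i+1}$ is not; taking the minimum of the two bounds gives a per-layer failure probability of at most roughly $\log^2 m/\sqrt{m}$ for every $i$, and the union bound over $\log\log m$ layers yields the stated $1/m^{1/4}$ (this is also why the lemma only claims success probability $1-m^{-1/4}$ rather than something exponentially small). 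Note that your per-gate bound $\eta_{i+1}/2$ is too weak for this rescue --- Markov would then give only $1/2$ --- so you must also strengthen the first step to $O(\eta_i^3)$, which your Chernoff computation already supports once $f$ is a sufficiently large constant. With that strengthening and the case split (Chernoff when $\eta_i\gtrsim\log m/\sqrt{m}$, Markov otherwise), the rest of your argument, including the final step forcing $\overline{y_d}=1$, goes through as in the paper.
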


Notice here that the completeness $1 - \eta$ increases to $1 - (\eta)^2$ at each step, instead of $1 - \eta$ to $1 - \eta/2$, like it did in the previous section. This increase allows us to use only $\log\log m$ layers to get perfect completeness, albeit with high probability. Now we prove the lemma.

\begin{proof}
The statement is implied by proving that with probability $\geq 1-1/m^{1/4}$ for all $i$, $(1-\overline{y_{i+1}}) \leq (1-\overline{y_i})^2$.\\
We will prove the lemma by induction on $i$. Note that the base case $i=0$, holds trivially. Now consider the $(i+1)^{th}$ layer of the circuit and the gates $L_{(i+1)j}$ that take as input the set $S_{(i+1)j}$ corresponding to random sets of size $f$ from $[w_i]$.

By induction $\overline{y_i} \geq 1 - \left(\frac{1}{10}\right)^{2^{i}} \geq .9$ and $.2/(1-\overline{y_i}) \geq 2$. For a fixed $L_{(i+1)j}$, by the Chernoff bound~\ref{c:3} on number of $0$'s we get,
\begin{align*}
\Pr[L_{(i+1)j}(y_i) = \Thr{.8}(y_i|_{S_{(i+1)j}}) = 0] &= \Pr[\Thr{.2}((1-y_i)|_{S_{(i+1)j}}) = 1]\\  &\leq \exp\left(-\Omega\left(\left(.2/\left(1-\overline{y_i}\right)\right) \cdot \log\left(.2/\left(1-\overline{y_i}\right)\right) \cdot \left(1-\overline{y_i}\right)f\right)\right)\\ &= \exp(-\Omega(\log(.5/(1-\overline{y_i})) \cdot f))\\ 
&= \exp(\Omega(\log(5(1-\overline{y_i})) \cdot f))\\ 
&=  O\left(\left(1-\overline{y_i}\right)^3\right)
\end{align*}
for some large enough constant $f$.

Chernoff bound~\ref{c:2} over all the gates in $L_{i+1}$ for the number of 0's gives gives that,
\begin{align}
\begin{split}\label{e:1}
\Pr[(1-\overline{y_{i+1}}) \geq \left(1-\overline{y_i}\right)^2] &< \exp\left(-\Omega\left(\left(\left(1-\overline{y_i}\right)^2/\left(1-\overline{y_i}\right)^3\right) \cdot \left(1-\overline{y_i}\right)^3 \cdot \left(m/2^{i}\right)\right)\right)\\ &= \exp\left(-\Omega\left(\left(1-\overline{y_i}\right)^2\left(m/2^{i}\right)\right)\right)
\end{split}
\end{align}
As we have $\log{\log{m}}$ layers in the last layer the number of gates is $m/2^{\log{\log{m}}} > m/\log{m}$, hence
$$\Pr[(1-\overline{y_{i+1}}) \geq \left(1-\overline{y_i}\right)^2] < \exp\left(-\Omega\left(\left(1-\overline{y_i}\right)^2\left(m/\log m\right)\right)\right)$$
A Markov bound over all the gates in $L_{i+1}$ for the number of 0's gives gives that,
\begin{align}\label{e:2}
\Pr[(1-\overline{y_{i+1}}) \geq (1-\overline{y_i})^2] \leq \frac{\text{E}[1-\overline{y_{i+1}}]}{(1-\overline{y_i})^2} \leq O((1-\overline{y_i})).
\end{align}
Equations~\ref{e:1} and~\ref{e:2} together these bounds imply
\begin{align*}
\Pr[(1-\overline{y_{i+1}}) \geq (1-\overline{y_i})^2] &\leq \min(\exp(-\Omega((1-\overline{y_i})^2(m/\log m))), O((1-\overline{y_i})))\\
&\leq \log^2(m)/\sqrt{m}
\end{align*}
Union bound over all $\log m$ layers gives probability $\leq (\log\log m)\log^2(m)/\sqrt{m}$ $\leq 1/m^{1/4}$.
Hence with probability $\geq 1-1/m^{1/4}$, $\overline{y_d} \geq 1 - \left(\frac{1}{10}\right)^{2^{\log\log(m)}} \geq 1-1/m^2$. As there are $\leq m$ gates at last layer this means with probability $\geq 1-1/m^{1/4}$, $\overline{y_d} = 1$.
\end{proof}

\begin{lemma}[Soundness]\label{lem:soundness-loglog} 
Let $y_{i} \in \Bool{w_i}$ denote an instantiation of the output gates of layer $i$ with $\overline{y_i} \leq 7/10$. Let $L_{i+1}(y_i)$ denote the output of layer $i+1$ when evaluated on the string $y_i$. Then with probability $1-2^{-m/2^{i}}$, for all $y_i$, $L_{i+1}(y_i)$ satisfies $\overline{L_{i+1}(y_i)} \leq 6/10$. Formally,

$$\Pr[\forall y_i \text{ with } \overline{y_i} \leq 7/10, \overline{L_{i+1}(y_i)} \leq 6/10] \geq 1 - 2^{-m/2^{i}}.$$
\end{lemma}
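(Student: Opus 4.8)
The plan is to first fix a single bad input $y_i$ with $\overline{y_i}\le 7/10$ and bound the probability that the random layer $L_{i+1}$ blows it up to $\overline{L_{i+1}(y_i)} > 6/10$, and then take a union bound over all $2^{w_i} = 2^{m/2^i}$ possible inputs $y_i$. So the core estimate must be exponentially small in $w_i = m/2^i$ with a constant in the exponent large enough to beat the $2^{w_i}$ union bound.

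First I would analyze one gate. A gate $L_{(i+1)j}$ samples a multiset $S_{(i+1)j}$ of $f=O(1)$ coordinates of $[w_i]$ uniformly with replacement, so each sampled bit is $1$ with probability exactly $\overline{y_i}\le 7/10$, independently. The gate outputs $1$ iff at least $0.8 f$ of these $f$ bits are $1$, i.e.\ iff the empirical mean exceeds $0.8$ while the true mean is at most $0.7$; that is a deviation of at least $1/10$ above the mean, so by the Chernoff bound~\ref{c:2} we get $\Pr[L_{(i+1)j}(y_i)=1] \le \exp(-\Omega(f)) =: p_0$, and by choosing the constant $f$ large enough we can make $p_0 \le 1/100$ (or any desired small constant).

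Next, since the gates of layer $i+1$ use independently sampled input sets, the indicator variables $\{L_{(i+1)j}(y_i)\}_{j\in[w_{i+1}]}$ are i.i.d.\ Bernoulli$(p_j)$ with $p_j\le p_0$. Their average $\overline{L_{i+1}(y_i)}$ has expectation at most $p_0 \le 1/100 < 6/10$, so $\{\overline{L_{i+1}(y_i)} > 6/10\}$ is again a large deviation event. Applying Chernoff bound~\ref{c:2} (with multiplicative slack $1+\delta$ where $(1+\delta)p_0 = 6/10$, so $\delta = \Omega(1)$) over the $w_{i+1}=m/2^{i+1}$ gates yields
$$\Pr\left[\overline{L_{i+1}(y_i)} > 6/10\right] \le \exp\left(-\Omega\left(m/2^{i+1}\right)\right).$$
Crucially the hidden constant here can be taken as large as we like by first shrinking $p_0$ (which only requires enlarging $f$), so we may ensure this probability is at most $2^{-2 m/2^i}$ for every fixed $y_i$. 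A union bound over all $2^{m/2^i}$ strings $y_i\in\{0,1\}^{w_i}$ then gives $\Pr[\exists\, y_i,\ \overline{y_i}\le 7/10,\ \overline{L_{i+1}(y_i)}>6/10] \le 2^{m/2^i}\cdot 2^{-2m/2^i} = 2^{-m/2^i}$, which is the claimed bound.

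The main obstacle is the bookkeeping of constants: one has to verify that the single-gate acceptance probability $p_0$ can be pushed below the threshold that makes the second-level Chernoff exponent exceed $\ln 2$ times $w_i$ (so that it survives the $2^{w_i}$-term union bound), and that this is achievable with $f$ still $O(1)$ — this is fine since $p_0 = \exp(-\Omega(f))$ decays exponentially in $f$ while we only need $p_0$ below a fixed constant. A minor subtlety is that sampling with replacement makes the $f$ bits within one gate genuinely independent (so the per-gate Chernoff bound applies cleanly), and distinct gates sample independently (so the second Chernoff bound applies); both are immediate from the circuit's definition. No worst-case correlation across gates needs to be handled, unlike in the derandomized (sampler-based) version of Section~\ref{sec:logn}.
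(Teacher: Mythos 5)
Your proposal is correct and follows essentially the same route as the paper: bound the probability that a single threshold gate fires when $\overline{y_i}\le 7/10$ (Chernoff over the $f$ independently sampled inputs), then a second Chernoff bound over the independent gates of layer $i+1$ with the exponent boosted by enlarging $f$ so that the per-string failure probability is at most $2^{-2m/2^i}$, and finally a union bound over the $2^{m/2^i}$ strings $y_i$. The only nitpick is that for the second level you should invoke the stronger bound~\ref{c:3} (as the paper does, where the $\log(6f/10)$ factor supplies the growing constant) rather than bound~\ref{c:2}, whose stated exponent $\Omega(\delta\mu)$ with $(1+\delta)\mu\approx 6/10$ stays bounded as $p_0\to 0$ and hence does not by itself justify making the hidden constant arbitrarily large.
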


\begin{proof}
Fix a gate $L_{(i+1)j}$. Given that the fraction of $1$s in layer $i$ is at most $7/10$, using Chernoff bound~\ref{c:2}, we get that,

\begin{align}
\Pr[\Thr{0.8}(S_{(i+1)j}) = 1] &= \Pr[\frac{1}{f}\sum_{k \in S_{(i+1)j}}\ell_{ik} - 7/10 > 8/10=7/10(1+1/7)] \\
&< \exp(-(1/7)^2(7f/10)/3) \\
&< 1/f, 
\end{align}

for large enough constant $f$.

By applying Chernoff bound~\ref{c:3} (assuming large enough $f$) over all gates $L_{(i+1)j}$, we get that,
\begin{align*}
\Pr[\overline{L_{i+1}(y_i)} > 6/10] &< \exp\left(-\Omega\left(\left(6f/10\right)\left(\log\left(6f/10\right)\right)\left(m/\left(f2^{i+1}\right)\right)\right)\right)\\
&= \exp\left(-\Omega\left(\left(\log\left(6f/10\right)\right)\left(m/2^{i}\right)\right)\right)\\ &< \exp\left(-2m/2^{i}\right).
\end{align*}
 for large enough constant f.\\
Hence a union bound over all possible $2^{m/2^{i}}$ strings $y_i$ gives that,
$$\Pr[\exists y_i, \overline{L_{i+1}(y_i)} > 6/10] \leq 2^{m/2^{i}}e^{-2m/2^{i}} < 2^{-m/2^{i}}.$$
\end{proof}

\section{Gap-ETH without perfect completeness}

In this section we study the relation between the time complexities of approximating MAX 3-SAT with and without perfect completeness. We show that the Gap-ETH conjecture with and without perfect completeness is equivalent by giving an algorithm for approximating MAX 3-SAT without perfect completeness, that uses an algorithm for approximating MAX 3-SAT with perfect completeness as a subroutine and runs in $2^{o(n)}$-time iff the latter does so.

\subsection{Reduction for two-sided error randomized algorithms}
We first prove that Gap-ETH conjecture with and without perfect completeness are equivalent for randomized algorithms with two-sided error. We show this by showing that the Gap-ETH conjecture without perfect completeness is false if the one with perfect completeness is false.
\begin{theorem}\label{thm:equ2}
If there exists a randomized (two-sided error) $2^{o(n)}$ time algorithm for MAX 3SAT$(1,1-\gamma)$, for all constants $\gamma > 0$, then there exists a randomized (two-sided error) $2^{o(n)}$ time algorithm for MAX 3SAT$(s(1+\epsilon),s)$ for all constants $s, \epsilon$.
\end{theorem}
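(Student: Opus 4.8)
The plan is to reduce MAX 3SAT$(s(1+\eps),s)$ to MAX 3SAT$(1,1-\gamma)$ for a suitable constant $\gamma>0$, in a way that only blows up the instance size by a constant factor, and then invoke the assumed $2^{o(n)}$ algorithm for the perfect-completeness problem. The natural route is to apply the PCP-flavored transformation developed in Section~\ref{sec:logn}: a MAX 3SAT$(c,s)$ instance with $n$ variables and $O(n)$ clauses is, by the standard equivalence, a $\PCP_{c,s}[r,q]$ system with $r = \log n + O(1)$ and $q = O(1)$; by Theorem~\ref{thm:pcp} (with the arbitrary-constant-soundness remark that follows it, using derandomized serial repetition) this embeds into $\PCP_{1,s'}[r, q + O(r)]$ for any desired constant $s'$, with proof length $n + O(2^r) = O(n)$. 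Converting back to a CSP gives a MAX $q'$-CSP$(1, s')$ instance on $O(n)$ variables with $q' = q + O(\log n) = O(\log n)$ clause width. Since we want a \emph{3}SAT instance, the last step is to reduce the width from $O(\log n)$ down to $3$ using the textbook clause-splitting gadget (introducing auxiliary variables), which preserves perfect completeness, keeps soundness bounded away from $1$ (perhaps after another round of the soundness-amplification remark), and multiplies the number of variables and clauses by only $\poly(q') = \poly(\log n)$; crucially this is still $2^{o(n)}\cdot\poly(\log n) = 2^{o(n)}$ inside the exponent, so the composed running time remains subexponential. Chaining everything: the decision problem MAX 3SAT$(s(1+\eps),s)$ on $n$ variables maps to MAX 3SAT$(1, 1-\gamma)$ on $n\cdot\poly(\log n)$ variables, and the hypothesized algorithm solves the latter in $2^{o(n\,\poly\log n)} = 2^{o(n)}$ time — wait, that last equality is the delicate point.

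Concretely, I would carry out the steps in this order. First, fix the target soundness $s'$ (say $s' = 1/2$, or whatever makes the final 3SAT soundness land safely below $1$ after width reduction) and the required gap constant $\gamma$ coming out of the composition. Second, start from an arbitrary MAX 3SAT$(s(1+\eps),s)$ instance $\phi$ on $n$ variables; re-scale the gap to a $(9/10, 6/10)$-type constant gap at $\poly(1/\eps)$ multiplicative blow-up using a standard random-sampling / gap-normalization step (or simply observe Theorem~\ref{thm:pcp} is stated for \emph{all} constants $c>s$, so no rescaling of the gap location is needed — one directly has $\PCP_{s(1+\eps),s}[r,q] \subseteq \PCP_{1,s'}[r, q+O_{s,\eps}(r)]$). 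Third, apply Theorem~\ref{thm:pcp} plus the serial-repetition remark to get perfect completeness and any constant soundness, paying an additive $O(r) = O(\log n)$ in queries and an additive $O(2^r) = O(n)$ in proof length. Fourth, rewrite the resulting $(r, O(\log n))$-restricted verifier as a CSP with $O(n)$ variables, $O(n)$ constraints, each of width $O(\log n)$. Fifth, apply clause-width reduction to get 3SAT, paying a $\poly(\log n)$ blow-up in variables/clauses and a mild further loss in soundness (absorbed by one more serial-repetition step if needed, which keeps everything a constant). Sixth, feed the final MAX 3SAT$(1,1-\gamma)$ instance, now on $N = n\cdot\poly(\log n)$ variables, into the assumed $2^{o(N)}$ algorithm, and output its answer — perfect completeness on both sides and the soundness bookkeeping guarantee the YES/NO cases are preserved.

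The main obstacle, and the point I would be most careful about, is the running-time accounting in the final step: a $2^{o(N)}$ algorithm on $N = n\cdot\poly(\log n)$ variables runs in time $2^{o(n\,\poly\log n)}$, and this is \emph{not} obviously $2^{o(n)}$ — it is $2^{o(n)}$ only if the "$o$" in the hypothesis is uniform enough, i.e. if for \emph{every} $\delta>0$ the perfect-completeness problem has a $2^{\delta n}$-time algorithm (as the Gap-ETH statement indeed asserts). Given that hypothesis, for the target instance of size $N = n\,\poly(\log n)$ and any $\delta'>0$ we pick the $2^{\delta' N}$ algorithm; to beat an arbitrary target $2^{cn}$ we would need $\delta'\,\poly(\log n) < c$, which fails for fixed $\delta'$. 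The correct resolution — and this is precisely why the $\log n$-query version (Theorem~\ref{thm:pcp}) is the right tool and the $\log\log$-query randomized version (Theorem~\ref{thm:rand-red}) gives the sharper Corollary — is to note that the width-reduction blow-up is only $\poly(\log N)$, so $N/\log^{O(1)} N \ge n$, and one invokes the hypothesis in the form "$2^{o(N)}$ for all such $N$" to conclude a $2^{o(n)}$ bound after re-parametrizing; making this quantifier juggling rigorous (choosing the rate as a slowly-decaying function of $n$) is the one genuinely fiddly part. Everything else — the PCP$\leftrightarrow$CSP equivalence, Theorem~\ref{thm:pcp}, serial repetition, clause splitting — is black-box or standard.
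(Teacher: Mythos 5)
Your route has two genuine gaps, and the second one you yourself flag as ``fiddly'' is in fact fatal rather than fiddly. First, the width reduction: after Theorem~\ref{thm:pcp} you hold a CSP with $O(n)$ constraints of width $O(\log n)$, perfect completeness, and constant soundness $s'$. Splitting each width-$O(\log n)$ constraint into $\poly(\log n)$ clauses of width $3$ (Tseitin/clause splitting) preserves completeness~$1$, but in the soundness case a violated constraint only forces \emph{one} of its $\poly(\log n)$ gadget clauses to fail, so the fraction of unsatisfied 3SAT clauses drops to $\Omega(1/\log n)$: the resulting instance is MAX 3SAT$(1,1-\Theta(1/\log n))$, not MAX 3SAT$(1,1-\gamma)$ for a constant $\gamma$, and the hypothesis does not apply to it. The ``one more round of serial repetition'' fix is circular: amplifying a $1/\log n$ gap back to a constant needs $\Theta(\log n)$ repeated checks, i.e.\ width $\Theta(\log n)$ again; the only real fix is Dinur-style query/gap amplification, which costs a superconstant blow-up in instance size. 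Second, even granting only a $\polylog$ (or $(\log\log n)^{O(1)}$) size blow-up, the running-time accounting does not close: a $2^{o(N)}$ algorithm run on $N = n\,\polylog (n)$ variables may take time, e.g., $2^{N/\log N} \geq 2^{n}$, so ``$2^{o(N)}$ for all such $N$'' simply does not yield $2^{o(n)}$; no re-parametrization of the $o(\cdot)$ rescues this, because the hypothesis fixes the algorithm (and hence its subexponential rate) before $n$ is chosen. This is exactly why the paper states the PCP-based route (Theorem~\ref{thm:rand-red} composed with Dinur's query reduction) only as the fine-grained corollary $T(n(\log\log n)^{O(1)})$, and \emph{not} as a proof of Theorem~\ref{thm:equ2}.

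The paper's actual proof of Theorem~\ref{thm:equ2} avoids PCP machinery entirely and keeps the instance size at $O(n)$ with \emph{constant} clause width, at the price of an exponentially small success probability that is then paid for by repetition. Working in the contrapositive, Lemma~\ref{lem:gap-easy} takes the MAX 3SAT$(s(1+\eps),s)$ instance and forms $n$ new constraints $B_i = \Thr{s(1+\eps/2)}(\mathcal{C}|_{S_i})$, where each $S_i$ is a random set of $k$ clauses ($k$ a large constant): a Chernoff bound plus a union bound over all $2^n$ assignments shows a NO instance maps to a MAX $O(k)$-CSP$(1,1/2)$ NO instance except with probability $2^{-n}$, while a YES instance maps to a YES instance with probability at least $2^{-n/k}$. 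Since the width is the constant $k$, converting to 3SAT costs only a constant factor $O(k2^k)$ in size and keeps the gap constant. The algorithm runs this reduction $2^{n/k}n^2$ times, feeds each output to the assumed MAX 3SAT$(1,1-\gamma)$ solver, and accepts if any run accepts; the total time is $2^{n/k}n^2\cdot 2^{o(n)} < 2^{\delta n}$ once $k = k(\delta)$ is chosen large enough, which handles the quantifier over $\delta$ correctly. So the key idea you are missing is this trade of success probability for instance size: rather than a size-increasing deterministic reduction, use a size-preserving randomized one with success probability $2^{-n/k}$ and amplify by $2^{n/k}$ repetitions.
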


We will prove the above in its contrapositive form. Suppose there is a $2^{o(n)}$ algorithm for MAX 3SAT$(1,1-\gamma)$ for all constants $\gamma$. We will then show that for all constants $\epsilon,s,\delta$, there exists an algorithm for MAX 3SAT$(s(1+\epsilon),s)$ with running time less than $2^{\delta n}$. Our randomized algorithm for MAX 3SAT$(s(1+\epsilon), s)$ will use the algorithm for satisfiable MAX 3SAT$(1,1-\gamma)$ as a subroutine and run in time less than $2^{\delta n}$. The following lemma forms the crux of the proof.

\begin{lemma}\label{lem:gap-easy}
For all constant $s, \epsilon > 0$ there exists a large enough constant $k$, such that there exists a randomized reduction from MAX 3-SAT$(s(1+\epsilon),s)$ on $n$ variables and $O(n)$ clauses to MAX $O(k)$-CSP$(1,1/2)$ on $n$ variables and $O(n)$ clauses, such that:
\begin{itemize}
    \item If the original instance was a NO instance, then the reduction produces an instance which is not a NO instance with probability $\leq 2^{-n}$.
    \item If the original instance was a YES instance, then the reduction produces a YES instance with probability $\geq 2^{-n/k}$.
\end{itemize}
\end{lemma}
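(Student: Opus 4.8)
The plan is to amplify the multiplicative gap $(s(1+\eps), s)$ into a gap around $(1, 1/2)$ by a standard "random walk / random clause-tuple" construction, combined with a soundness-error-reduction step. The key point is that we are allowed randomized reductions with a \emph{one-sided} guarantee in each direction (exponentially small error on NO instances, exponentially small \emph{success} probability on YES instances), which is exactly the regime that lets us push completeness all the way up to $1$ by conditioning on a good event.

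First I would take the MAX 3-SAT$(s(1+\eps),s)$ instance $\phi$ with $n$ variables and $O(n)$ clauses, and build new constraints each of which is a conjunction (AND) of $k$ independently and uniformly chosen clauses of $\phi$, for a constant $k = k(s,\eps)$ to be fixed. Form $O(n)$ such $k$-tuples, so each new constraint is a Boolean predicate on $O(k)=O(1)$ variables — this is the MAX $O(k)$-CSP. For the soundness side: if $\phi$ is a NO instance, every assignment satisfies at most an $s$-fraction of clauses, so a uniformly random $k$-tuple is entirely satisfied with probability at most $s^k$; by a Chernoff bound over the $O(n)$ sampled tuples, with probability $1 - 2^{-\Omega(n)}$ \emph{every} assignment satisfies at most, say, a $2s^k$-fraction of the new constraints. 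Choosing $k$ large enough that $2s^k \le 1/2$ makes the produced instance a NO instance for MAX $O(k)$-CSP$(1,1/2)$ except with probability $\le 2^{-n}$ (after adjusting the constant in the number of sampled tuples), which is the first bullet.

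For the completeness side: if $\phi$ is a YES instance there is an assignment $x^\*$ satisfying more than an $s(1+\eps)$-fraction of clauses, so each sampled $k$-tuple is entirely satisfied by $x^\*$ with probability more than $(s(1+\eps))^k$. I would \emph{not} try to argue that all $O(n)$ tuples are simultaneously satisfied in expectation; instead I would simply observe that the event "$x^\*$ satisfies all $O(n)$ sampled $k$-tuples" has probability at least $\big((s(1+\eps))^k\big)^{O(n)} = 2^{-O(n)} \ge 2^{-n/k}$ provided we choose the number of sampled tuples to be a small enough constant times $n$, i.e. $c\cdot n$ with $c \cdot k \log(1/(s(1+\eps))) \le 1/k$, which is a constant constraint compatible with the Chernoff requirements above (tune the constants jointly). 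Conditioned on that event the new instance is perfectly satisfiable, giving the second bullet. The two requirements — enough tuples for the NO-side concentration, few enough tuples for the YES-side success probability $\ge 2^{-n/k}$ — are simultaneously satisfiable because the NO-side Chernoff bound only needs $\Theta(n)$ tuples with a constant we control, while the YES-side only forces the number of tuples to be $O(n)$ with a sufficiently small leading constant; there is no real tension, just bookkeeping.

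The main obstacle I expect is precisely this constant-chasing: one must pick $k$ first (to drive $s^k$ below $1/2$ with room to spare), then pick the number of sampled constraints $T = cn$ so that (i) a Chernoff/union bound over all $2^n$ assignments kills the NO-case error with a $2^{-n}$ slack, and (ii) $T\cdot k\log(1/(s(1+\eps)))$ stays below $n/k$ so the YES-case success probability is at least $2^{-n/k}$. Verifying that such $k, c$ exist — and that the resulting CSP still has $O(n)$ constraints on $n$ variables, each of arity $O(k)$ — is routine but must be done carefully; everything else (the independence of the sampled tuples, the union bound over assignments, the fact that an AND of 3-clauses is an $O(k)$-ary predicate) is immediate.
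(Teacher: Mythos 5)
There is a genuine gap, and it sits exactly where you wave it off as ``no real tension, just bookkeeping.'' With AND-of-$k$-clauses constraints, the YES assignment $x^*$ satisfies each sampled tuple only with probability about $(s(1+\eps))^k$, which is exponentially small in $k$; so your completeness event ``$x^*$ satisfies all $T=cn$ tuples'' has probability $2^{-\Theta(ckn\log(1/(s(1+\eps))))}$. Meanwhile the NO side forces $ck$ to be bounded below by a constant independent of $k$: for a fixed assignment of a NO instance each tuple is satisfied with probability at most $s^k$, so $\Pr[\text{more than } T/2 \text{ tuples satisfied}] \le \binom{T}{T/2}s^{kT/2} \le (4s^k)^{T/2}$, and beating the union bound over $2^n$ assignments with $2^{-n}$ slack requires $T(k\log_2(1/s)-2) \ge 4n$, i.e.\ $ckn\log_2(1/s) \gtrsim 4n$. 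Since $\log(1/(s(1+\eps))) = \log(1/s)-\log(1+\eps)$ differs from $\log(1/s)$ only by $O(\eps)$ (and the lemma must hold for arbitrarily small $\eps$; the paper even assumes $\eps<1/100$ WLOG), these two requirements force your completeness probability down to roughly $2^{-4n(1-O(\eps))}$ --- a fixed exponential worse than $2^{-n}$, and in particular never $\ge 2^{-n/k}$, no matter how you tune $c$ and $k$. This is fatal for the intended use (Theorem~\ref{thm:equ2} reruns the reduction $2^{n/k}\poly(n)$ times, so the success probability must improve as $k$ grows), and it is not fixable by constant-chasing: soundness pins $kT=\Omega(n)$ with an absolute constant, while your completeness bound decays like $2^{-\Theta(kT)}$.

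The missing idea is that the new predicate must be one that the good assignment passes with probability tending to $1$ as $k\to\infty$, not tending to $0$. The paper achieves this by replacing the AND with an approximate threshold: each new clause is $B_i=\Thr{s(1+\eps/2)}(\mathcal{C}|_{S_i})$ over $k$ randomly sampled clauses, with the threshold placed strictly between the soundness $s$ and completeness $s(1+\eps)$. By Chernoff, in the YES case each $B_i$ fails with probability $\exp(-\Omega(\eps^2 sk))$, so all $n$ of them hold simultaneously with probability at least $(1-10/k)^n \ge 2^{-n/k}$; in the NO case each fixed assignment satisfies each $B_i$ with probability $\exp(-\Omega(\eps^2 sk))$, and the same Chernoff-plus-union-bound over $2^n$ assignments gives the $2^{-n}$ soundness guarantee. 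Your overall architecture (sample $O(n)$ local tests of arity $O(k)$, union-bound soundness, lower-bound completeness via the single good assignment) matches the paper; the choice of AND as the local test is what breaks.
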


\begin{proof}
Let $\mathcal{C} = \{C_1,\ldots,C_m\}$ be a MAX 3SAT$(s(1+\epsilon),s)$ instance. We can assume without loss of generality, that $\epsilon < 1/100$, since the result for a smaller gap implies the result for a larger gap.

Let $(S_i)_{i = 1}^{n}$
be a set family in which every set $S_i$ is a random set of cardinality $k$
chosen by sampling with replacement from $[m]$. Consider new clauses $B_i$ such that each clause is a threshold gate: $B_i = \Thr{s(1+\epsilon/2)}(\mathcal{C}|_{S_i})$, where $\mathcal{C}$ denotes the vector $(C_1(x),\ldots,C_m(x))$.

Our final CSP will be over the original set of variables $x_i$. We will have a clause for each of the $n$ $B_i$'s. For the $i^{th}$ clause $B_i$, we will find the values of all $C_j$ such that $j \in S_i$ and then verify that their threshold value is $\geq s(1+\epsilon/2)$. Our query size is $3k$ as we find values for variables in $k$ clauses each of them on 3 variables.

\begin{paragraph}{\bf Soundness}
For a NO instance and a fixed assignment $x$ the fraction of clauses satisfied by $x$ is $\leq s$. By the Chernoff bound~\ref{c:2}, the probability that clause $B_i$ is satisfied is $\leq \exp\left(-\left(\eps/2\right)^2sk/3\right)$. The probability that at least half of the $B_i$'s are satisfied is at most, $\binom{n}{n/2}\exp\left(-\Omega\left(\eps^2skn\right)\right)$ which is less than $\exp(-2n)$, when $k$ is taken to be a large enough constant, depending only on $\epsilon, s$. Therefore by a union bound, the probability that there exists an assignment $x$ that satisfies at least half of the $B_i$'s is $\leq 2^n \exp(-2n) \leq 2^{-n}$.
\end{paragraph}

\begin{paragraph}{\bf Completeness}
For a YES instance there exists an assignment $x$ that satisfies $\geq s(1+\epsilon)$-fraction of the clauses. By the Chernoff bound~\ref{c:2} the probability that the clause $B_i$ is unsatisfied is $\leq \exp(-(\eps/3)^2sk/2)$ as $\epsilon < 1/100$. Therefore the probability that all the $B_i$'s are satisfied is $(1-\exp(-\Omega(\eps^2sk)))^{n} \geq (1-10/k)^{n}$ which is $\geq 2^{-n/k}$ when $k$ is a large enough constant.
\end{paragraph}
\end{proof}

\begin{proof}[Proof of Theorem~\ref{thm:equ2}]

The randomized algorithm for solving MAX 3SAT$(s(1+\epsilon), s)$ is as follows: We will run the reduction from Lemma~\ref{lem:gap-easy} $2^{n/k}n^2$ times and then convert the resulting MAX $O(k)$-CSP$(1,1/2)$ instances to MAX 3SAT$(1,1-\gamma)$ instances on $O(k2^kn)$ variables and $O(k2^kn)$ clauses where $\gamma$ is a constant depending on $k$. Then we run the $2^{o(n)}$ algorithm for MAX 3SAT$(1,1-\gamma)$  (still $2^{o(n)}$ as $k, \gamma$ are constants) on the resulting instances and if any of the outputs is YES we will also output YES.  

By repeating the algorithm for MAX 3SAT$(1,1-\gamma)$ $\poly(n)$ times we can assume the the probability that the algorithm errs is $\leq 2^{-n^2}$, hence we will assume this wlog.

\begin{align*}
\Pr[\text{Error on a YES instance}] &\leq \Pr[\text{Algorithm errs on one of the produced instances}]\\
&+\Pr[\text{None of the $2^{n/k}n^2$ runs produce a YES instance}]\\
&\leq 2^{-n^2}2^{n/k}n^2+(1-2^{-n/k})^{2^{n/k}n^2}\\
&\leq 2^{-n/2}
\end{align*}

\begin{align*}
\Pr[\text{Error on a NO instance}] &\leq \Pr[\text{Algorithm errs on one of the produced instances}]\\
&+\Pr[\text{On one of the $2^{n/k}n^2$ runs the output was not a NO instance}]\\
&\leq 2^{-n^2}2^{n/k}n^2+2^{n/k}n^22^{-n}\\
&\leq 2^{-n/2}
\end{align*}

Total running time $= 2^{n/k}n^22^{o(n)}$ which for large enough $k$ is $< 2^{\delta n}$. This gives us the desired contradiction.
\end{proof}

\subsection{Reduction for one-sided error randomized algorithms with no false positives}
We now prove that in fact Gap-ETH conjecture with and without perfect completeness are equivalent for randomized algorithms with no false positives.

\begin{reminder}{Theorem~\ref{thm:equ}} 
If there exists a randomized (with no false positives) $2^{o(n)}$ time algorithm for MAX 3SAT$(1,1-\gamma)$ for all constant $\gamma > 0$ then there exists a randomized(with no false positives) $2^{o(n)}$ time algorithm for MAX 3SAT$(s(1+\epsilon),s)$ for all constants $s, \epsilon > 0$.
\end{reminder}

As in the proof of Theorem~\ref{thm:equ2}, we will prove the above in its contrapositive form.
Suppose there is a $2^{o(n)}$ algorithm (with no false positives) for MAX 3SAT$(1,1-\gamma)$ for all constants $\gamma$. We will then show that for all constants $\epsilon,s,\delta$, there exists an algorithm (with no false positives) for MAX 3SAT$(s(1+\epsilon),s)$ with running time less than $2^{\delta n}$. Our randomized algorithm for MAX 3SAT$(s(1+\epsilon), s)$ will use the algorithm for satisfiable MAX 3SAT$(1,1-\gamma)$ as a subroutine and run in time less than $2^{\delta n}$. The following lemma which is a stronger version of Lemma~\ref{lem:gap-easy} with only one-sided error forms the crux of the proof.

\begin{lemma}\label{lem:gap}
For all constant $s, \epsilon > 0$ there exists a large enough constant $k$, such that there exists a randomized reduction from MAX 3SAT$(s(1+\epsilon),s)$ to MAX $O(k)$-CSP$(1,1/2)$ with $O(n)$ variables such that:
\begin{itemize}
    \item If the original instance was NO, then the reduction produces a NO instance.
    \item If the original instance was YES, then the reduction produces a YES instance with probability $\geq 2^{-n/k}$.
\end{itemize}
\end{lemma}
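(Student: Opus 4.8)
The plan is to strengthen the randomized reduction of Lemma~\ref{lem:gap-easy} by attaching to it an \emph{efficiently checkable certificate} that the random sampling sets behave like a good sampler against every assignment simultaneously, and to fall back to a trivially unsatisfiable instance when the certificate fails. Concretely, given a MAX 3SAT$(s(1+\epsilon),s)$ instance $\mathcal{C}=\{C_1,\dots,C_m\}$ with $m=O(n)$, first pick $m$ random size-$k$ sets $S_1,\dots,S_m\subseteq[m]$ (equivalently, a random $k$-regular graph on the clauses with $S_i$ the neighborhood of $i$) and form the clauses $B_i=\Thr{\tau}(\mathcal{C}|_{S_i})$ with threshold $\tau=s(1+\epsilon/2)$, exactly as in Lemma~\ref{lem:gap-easy}. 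Before outputting, the reduction computes the second eigenvalue of the set system (a polynomial-time computation); if the spectral gap is large enough that the expander-mixing bound of Lemma~\ref{lem:exp-samp} applies with parameters $(\delta_0,\epsilon_0)=(1/4,\,s\epsilon/4)$, it outputs the MAX $O(k)$-CSP with clauses $B_1,\dots,B_m$ over the original $O(n)$ variables (each $B_i$ reads $k$ width-$3$ clauses, so has width $3k=O(k)$); otherwise it outputs a single unsatisfiable clause, which is a NO instance of MAX $O(k)$-CSP$(1,1/2)$.

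For soundness I would argue that on a NO instance the output is \emph{always} a NO instance: if the spectral check fails this is immediate; if it passes, the set system is a good sampler, so for every assignment $y$, which satisfies $\overline{\mathcal{C}(y)}\le s$, the fraction of $i$ with $B_i(y)=1$ — equivalently with $\overline{(\mathcal{C}(y)|_{S_i})}\ge\tau>s+\epsilon_0$ — is at most $\delta_0=1/4<1/2$, using that $\tau=s+s\epsilon/2$ exceeds $s$ by more than $\epsilon_0=s\epsilon/4$. For completeness I would argue verbatim as in Lemma~\ref{lem:gap-easy}: on a YES instance with witness $x$ ($\overline{\mathcal{C}(x)}\ge s(1+\epsilon)$), a Chernoff bound gives $\Pr_{S}[\overline{(\mathcal{C}(x)|_{S})}<\tau]\le\exp(-\Omega(\epsilon^2 s k))$, hence $\Pr[\forall i:\,B_i(x)=1]\ge(1-\exp(-\Omega(\epsilon^2 s k)))^{m}\ge 2^{-n/k}$ for $k$ a large enough constant; moreover a standard union bound over the $2^m$ strings $z\in\Bool{m}$ (Chernoff over the $m$ independent samples for each fixed $z$) shows the random set system fails to be a good sampler, and hence can fail the spectral check, only with probability $2^{-\Omega(n)}$. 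Since $2^{-\Omega(n)}\ll 2^{-n/k}$ for large $k$, with probability $\ge 2^{-n/k}$ the check passes \emph{and} all $B_i(x)=1$, the reduction takes the first branch, and $x$ satisfies all $m$ clauses, giving a YES instance.

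The main obstacle is exactly the upgrade from ``NO $\to$ NO with probability $1-2^{-n}$'' in Lemma~\ref{lem:gap-easy} to ``with probability $1$'': a purely random set system is good against all $2^n$ assignments only with overwhelming, not certain, probability, and directly certifying ``no assignment over-satisfies the $B_i$'s'' is NP-hard. The point that makes the approach work is that a \emph{spectral} condition is a polynomial-time-checkable \emph{sufficient} certificate for the sampler property against all assignments at once (via the expander mixing lemma, as already used in Lemma~\ref{lem:exp-samp}), and random regular set systems satisfy it except with probability $2^{-\Omega(n)}$ — small enough that conditioning on it leaves the $2^{-n/k}$ completeness guarantee intact, because for large $k$ one has $n/k<\Omega(n)$. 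A secondary routine point is calibrating $\tau$ and the sampler parameters $\epsilon_0,\delta_0$ so the gap $s(1+\epsilon)$ versus $s$ survives sampling with slack on both sides — pushing completeness to $1$ on the lucky event and soundness to $1/4<1/2$ unconditionally — which, as in Lemma~\ref{lem:gap-easy}, only needs $\epsilon<1/100$ and $k$ large in terms of $s,\epsilon$.
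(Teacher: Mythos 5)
Your high-level plan (randomize, then attach a polynomial-time checkable certificate and fall back to a NO instance when it fails) is the right shape, but the specific certificate you chose creates a genuine gap in the completeness bound. Your argument that the spectral check passes with probability $1-2^{-\Omega(n)}$ is backwards: the union bound over the $2^m$ strings shows the random set system is a good \emph{sampler} except with probability $2^{-\Omega(n)}$, but a small second eigenvalue is only a \emph{sufficient} condition for the sampler property (as you yourself note), so ``good sampler'' does not imply ``passes the spectral check,'' and the $2^{-\Omega(n)}$ bound does not transfer. In fact it is false for your distribution: for a random constant-degree set system, the event that the (normalized) second eigenvalue exceeds the constant threshold needed by Lemma~\ref{lem:exp-samp} is driven by local structures (e.g.\ a few vertices whose $k$ samples collide heavily), which occur with probability that is only polynomially small in $n$, not $2^{-\Omega(n)}$. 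Since the completeness event ``all $B_i(x_c)=1$'' only has probability about $2^{-n/(2k)}$, the bound $\Pr[\text{all }B_i=1] - \Pr[\text{check fails}]$ is vacuous, and you give no argument about the joint event (conditioning on all $B_i(x_c)=1$ biases the sets toward the satisfied clauses, so independence or favorable correlation is not evident). So as written the YES case does not yield probability $\geq 2^{-n/k}$.

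The paper sidesteps exactly this by putting the randomness and the certificate in different places: the sampler sets are a \emph{fixed, deterministic} expander sampler (so no spectral certificate is ever needed), and the randomness is in resampling the clause multiset $L$; the polynomial-time check is a combinatorial ``balancedness'' condition on clause multiplicities in $L$, which fails with probability $\exp(-\Omega(n))$ by a Chernoff-plus-union bound over subsets of \emph{clauses} (not assignments) -- genuinely exponentially small, so subtracting it from the completeness probability is harmless. Soundness is then unconditional just as in your plan, but completeness requires an extra idea your proposal doesn't need in its (flawed) setup: with the sets fixed and the randomness per clause-slot, the events ``$B_i(x_c)=1$'' are only \emph{locally} dependent (the expander sampler guarantees each set meets only $O(k^2)$ others), and the Lov\'asz local lemma (Lemma~\ref{lem:lll}) gives $\Pr[\bigwedge_i B_i = 1] \geq (1-1/O(k^2))^{O(n)} \geq 2^{-n/(2k)}$. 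To repair your approach you would either have to exhibit a certificate that random set systems satisfy with probability $1-2^{-\Omega(n)}$ (the spectral one is not such a certificate), or control the correlation between the certificate and the completeness event; the paper's restructuring avoids both issues.
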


We will first prove Theorem~\ref{thm:equ} using the lemma given above. This proof is similar to the proof of Theorem~\ref{thm:equ2}.

\begin{proof}[Proof of Theorem~\ref{thm:equ}]
The randomized algorithm for solving MAX 3SAT$(s(1+\epsilon), s)$ is as follows: We will run the reduction from Lemma~\ref{lem:gap} $2^{n/k}n^2$ times and then convert the resulting MAX $O(k)$-CSP$(1,1/2)$ instances to a MAX 3SAT$(1,1-\gamma)$ instances on $O(k2^kn)$ variables and $O(k2^kn)$ clauses where $\gamma$ is a constant depending on $k$. Then we run the $2^{o(n)}$ algorithm for MAX 3SAT$(1,1-\gamma)$ algorithm (still $2^{o(n)}$ as $k, \gamma$ are constants) on them and if any of the outputs is YES we will also output YES. 

By repeating the algorithm $\poly(n)$ times we can assume the the probability that the algorithm errs (one sided  error) is $\leq 2^{-n^2}$, hence we will assume this wlog.

For a NO original instance we will always output a NO instance.
\begin{align*}
\Pr[\text{Error on a YES instance}] &\leq \Pr[\text{Algorithm errs on one of the produced instances}]\\
&+\Pr[\text{None of the $2^{n/k}n^2$ runs produce a YES instance}]\\
&\leq 2^{-n^2}2^{n/k}n^2+(1-2^{-n/k})^{2^{n/k}n^2)}\\
&\leq 2^{-n}
\end{align*}

Total running time $= 2^{n/k}n^22^{o(n)}$ which for large enough $k$ is $< 2^{\delta n}$. This gives us the desired contradiction.
\end{proof}

\begin{proof}[Proof of Lemma~\ref{lem:gap}]
Let $\mathcal{C} = \{C_1,\ldots,C_m\}$ be a MAX 3SAT$(s(1+\epsilon),s)$. We can assume without loss of generality, that $\epsilon < 1/100$, since the result for a smaller gap implies the result for a larger gap. Let the number of clauses in $\mathcal{C}$ be $m = \rho n$. We will sample with repetition from $\mathcal{C}$ to produce a list $L$ of clauses of size $t \rho n$, for some $t > 1$. We call a list {\it balanced} if:
\begin{enumerate}
    \item For every set $S \subseteq \mathcal{C}, |S| = s\rho n$, total number of occurrences of clauses from $S$ in $L$ is at most $s(1+\epsilon/3)t\rho n$.\label{cond:1}
    \item For every set $S \subseteq \mathcal{C}, |S| = s(1+\epsilon)\rho n$, total number of occurrences of clauses from $S$ in $L$ is at least $s(1+2\epsilon/3)t\rho n$.\label{cond:2}
\end{enumerate}

It is easy to see that the probability of sampling an unbalanced list is:
\begin{align*}
\Pr[\text{L is unbalanced}] &\leq \binom{\rho n}{s\rho n}\exp\left(-\epsilon^2st\rho n/9\right) + \binom{\rho n}{s(1+\epsilon)\rho n}\exp\left(-\epsilon^2s(1+\epsilon)t\rho n/16\right) \\
&\leq \exp(-10\rho n),
\end{align*}
when $t$ is a large enough constant depending on $s, \epsilon$.

Let $\mathcal{C}'$ be the CSP given by the set of clauses in $L$ (repeated clauses might be present in $\mathcal{C}$). If $L$ is balanced then the soundness of $\mathcal{C}'$ is $\leq s(1+\epsilon/3)$ and completeness is $\geq s(1+2\epsilon/3)$. If our list is not balanced we will reject it and output any NO instance. This can be done in polynomial time as we can check the condition~\ref{cond:1} by finding a set of clauses of size $s\rho n$ which occurs the most and checking that it occurs at most $s(1+\epsilon/3)t\rho n$ in $L$. We can similarly check condition~\ref{cond:2}.

Let $(S_i)_{i = 1}^{\abs{L}}$ be the set family given to us by the expander sampler from Lemma~\ref{lem:exp-samp} with parameters $\mathcal{ES}((1/(s^2\epsilon^2k)), s\epsilon, \abs{L})$. Consider new clauses $B_i$ such that each clause is a threshold gate, i.e. $B_i = \Thr{s(1+\epsilon/2)}(\mathcal{C}'|_{S_i})$, where $\mathcal{C}'$ denotes the vector of clauses of $L$. By the sampler property $|S_i| \leq O(k)$ and the number of $B_i$'s is equal to $|L| = t\rho n$. 

Our final CSP will be given by the set of clauses $B_i$. For the $i^{th}$ clause will find the values of all $C_j'$ such that $j \in S_i$ and then verify that their threshold value is $\geq s(1+\epsilon/2)$. Our query size is $O(k)$ as we find values for variables in $O(k)$ clauses each of them is on 3 variables.

\begin{paragraph}{\bf Soundness}
If $L$ is balanced, in the NO case the soundness is $\leq s(1+\epsilon/3)$. Then we get that,

\begin{align*}
    \Pr_i[B_i(\mathcal{C'}) = 1] &= \Pr\left[\frac{1}{\abs{S_i}}\sum_{j \in S_{i}}\mathcal{C'}_j \geq s(1+\epsilon/2)\right]\\
    &= \Pr\left[\frac{1}{\abs{S_i}}\sum_{j \in S_{i}}\mathcal{C'}_j-s(1+\epsilon/3) \geq s(\epsilon/6)\right]\\
    &\leq 1/(s^2\epsilon^2k)
\end{align*}
where the last inequality follows from the properties of expander sampler in Lemma~\ref{lem:exp-samp}. Now for large enough $k$ we get $1/(s^2\epsilon^2k) \leq 1/2$ hence starting from all NO instances gives us NO instances.

If $L$ is unbalanced we always output a NO instance.
\end{paragraph}

\begin{paragraph}{\bf Completeness}

By the property of the expander sampler in Lemma~\ref{lem:exp-samp}, the number of query sets that intersect with some query set $S_i$ are at most $O(k^2)$ for large enough $k$. As the original instance was a YES instance there exists an $x = x_c$ which satisfies $s(1+\epsilon)$ fraction of the clauses. As each clause of list $L$ is a random clause from the original set of clauses, the probability that any specific $B_i$ evaluates to $1$ is $\geq 1- \exp\left(-\Omega\left(-\epsilon^2ks\right)\right)$ by the Chernoff bound for assignment $x_c$.

As each clause of list $L$ is a random clause from the original set of clauses, we get that the random variables (randomness from choosing the list $L$, for fixed sets $S_i$) $B_i$ and $B_j$ are independent, if two query sets $S_i$ and $S_j$ do not intersect.  As calculated above, the probability that any clause fails is $\leq \exp\left(-\Omega\left(-\epsilon^2ks\right)\right)$. For large enough constants $k$, $$e\cdot O\left(k^2\right)\exp\left(-\Omega\left(-\epsilon^2ks\right)\right) < 1$$ which allows us to apply the Lov\'asz local lemma as given in Lemma~\ref{lem:lll}. This gives us that for large enough $k$,
$$\Pr_L[\wedge B_i(\mathcal{C'}) = 1] \geq (1-1/O(k^2))^{t\rho n} \geq  2^{-n/(2k)}.$$  

Taking into account the case where $L$ is unbalanced, the probability of outputting a YES instance is $\geq 2^{-n/(2k)}-2^{-10 \rho n} \geq 2^{-n/k}$ for large enough $k$.
\end{paragraph}
\end{proof}

\section{Conclusion}
The reduction in Section~\ref{sec:logn} is not useful to get perfect completeness for PCPs, while preserving their query complexity and losing some factor in the randomness complexity. When the construction is composed with query reduction, it only gives us that $\PCP_{c,s}[\log n+O(1), O(1)] \subseteq \PCP_{1,s'}[\log n +O(\log\log n),O(1)]$, which is anyway the blow-up incurred in state of the art PCPs for $\NP$~\cite{dinur}. Hence we pose the following problem:
\begin{open}
Let $c, s, s' \in (0,1)$ with $s < c$ be constants. Then is it true that,
$$\PCP_{c,s}[\log n + O(1), O(1)] \subseteq \PCP_{1,s'}(\log n + o(\log\log n), O(1))?$$
\end{open}

\paragraph{Acknowledgments} We would like to thank our advisors Madhu Sudan and Ryan Williams for helpful discussions. We are also grateful to all the reviewers, for detailed comments on the paper.

\bibliography{references}
\appendix

\section{Proof of Theorem~\ref{samp}}\label{a:b}
We first consider an expander sampler from Lemma~\ref{lem:exp-samp}. This is an expander graph $G$ over $N$ nodes, with the set family $\mathcal{ES} = (S_v)_{v \in [N]}$, where $S_v = $ set of neighbors of $v$ in $G$. To get parameters $\mathcal{S}(\epsilon,\delta/2,N)$ we can see that from the proof in~\cite{GoldreichW97}, if one takes the second eigenvalue $\lambda$ to be small enough,  $\leq \poly(\epsilon,\delta)$, then the following holds: 

For any string $x \in \{0, 1\}^N$, 
\begin{equation}\label{samp:s}
\Pr\limits_{S \sim \mathcal{ES}}[|\overline{(x|_{S})} - \overline{x}| > \epsilon] \leq \delta/2. 
\end{equation}

Now let us see, how to achieve property (2). We use the Expander Mixing Lemma, and show (proof deferred to later in this section) that if $\lambda$ is small enough ($\leq \poly(\gamma)$) then the following holds:
For all $\eta < (1-\gamma)/2$, for any string $x \in \{0, 1\}^N$, where $\overline{x} \geq 1 - \eta$, we get that,
\begin{equation}\label{samp:c}
\Pr\limits_{S \sim \mathcal{ES}}[\overline{(x|_{S})} < \gamma] \leq \eta/4.
\end{equation}

Taking the second eigenvalue $\lambda$ less than the minimum required in both proofs above, we get that both the above statements hold, for some $\lambda = O_{\epsilon,\delta,\gamma}(1)$. Note that the degree of an expander, which is also the sample complexity, is $\poly(1/\lambda) = O_{\epsilon,\delta,\gamma}(1) = C$, hence property (3) holds.

To get property (4), we arbitrarily take $N/2$ of the samples and define this as the set family $\mathcal{S}$ given by the sampler. This hurts the probabilities in equations~\ref{samp:s} and~\ref{samp:c} by a factor of at most $2$ and hence we get properties (1), (2) for the new set family.

\begin{proof}[Proof of property(2)]
Let $B \subset [N]$ be the positions of zeros in $x$ and let $C \subset [N]$ be the set of vertices that have at least $1-\gamma$-fraction of their neighbors in $B$. Notice that the vertices $S$ in $C$ are exactly those samples on which $\overline{x|_S} < 1-\gamma$, hence it is enough to bound $|C|/N = \eta'$.

We know that $|B|/N = \eta$ and let $|C|/N = \eta'$. By the Expander mixing lemma we get that,
$$\frac{|E(B,C)|}{|E(G)|} \leq \eta\eta' + \ \lambda\sqrt{\eta\eta'},$$ where $\lambda$ is the second eigenvalue of graph $G$.
But by the property of $C$, we get, $\frac{|E(B,C)|}{|E(G)|} \geq (1-\gamma)\eta'$. Combining these two we get that $\eta' \leq (\lambda^2/(1-\gamma-\eta)^2)\eta \leq (4\lambda^2/(1-\gamma)^2)\eta$. We can take $\lambda$ to be small enough in terms of $\gamma$, to get that $\eta' < \eta/4$.
\end{proof}

\end{document}